\documentclass[10pt,twocolumn,twoside]{IEEEtran}
\usepackage{graphicx,graphics,latexsym,verbatim,epsfig,subfigure}
\usepackage{mathrsfs}
\usepackage{amssymb,amsmath}
\usepackage{amsthm}
\usepackage{tikz}
\newtheorem{lemma}{Lemma}

\newtheorem{definition}{Definition}

\newtheorem{theorem}{Theorem}
\newtheorem{corollary}{Corollary}
\newtheorem{remark}{Remark}
\newtheorem{assumption}{Assumption}
\newtheorem{proposition}{Proposition}

\IEEEoverridecommandlockouts                              


\title{\LARGE \bf
Dissipativity Tools for Convergence to Nash Equilibria in Population Games
}
\author{Murat Arcak$^{1}$ and Nuno C. Martins$^{2}$
\thanks{*Arcak's work was supported in part by the National Science Foundation grant CNS-1545116. Work by Martins was supported by AFOSR grant FA9550-19-1-0315.}
\thanks{$^{1}$ Murat Arcak is with the EECS Dept., University of California, Berkeley        {\tt\small arcak@berkeley.edu}}%
\thanks{$^{2}$ Nuno Miguel Lara Cintra Martins is with the ECE Dept. and ISR at the University of Maryland, College Park
        {\tt\small nmartins@umd.edu}}%
}

\begin{document}

\maketitle

\begin{abstract}
We analyze the stability of a nonlinear dynamical model describing the noncooperative strategic interactions among the agents of a finite collection of populations. Each agent selects one strategy at a time and revises it repeatedly according to a protocol that typically prioritizes strategies whose payoffs are either higher than that of the current strategy or exceed the population average. 
The model is predicated on well-established research in population  and evolutionary games, and has two components.
The first is the payoff dynamics model (PDM), which ascribes the payoff to each strategy 
according to the so-called social state vector whose entries are the proportions of every population adopting the available strategies. The second component is the evolutionary dynamics model (EDM) that accounts for the revision process. In our model, the social state at equilibrium is a best response to the strategies' payoffs, and  can be viewed as a Nash-like solution that has predictive value when it is globally asymptotically stable (GAS). We present a systematic methodology that ascertains GAS by checking separately whether the EDM and PDM satisfy appropriately defined system-theoretic dissipativity properties. Our work generalizes pioneering methods based on notions of contractivity applicable to memoryless PDMs, and more general system-theoretic passivity conditions. As demonstrated with examples, the added flexibility afforded by our approach is particularly useful when the contraction properties of the PDM are unequal across populations.
\end{abstract}


\vspace{-.1in}
\section{Introduction}
Consider a large number of agents that interact by selecting strategies in a noncooperative way. Each agent selects one strategy at a time and revises it repeatedly in response to a payoff vector whose entries are the strategies' payoffs. The revision process of each agent is governed by a protocol that, in general, is probabilistic and prioritizes strategies with a higher payoff.
Each agent belongs to one  of a finite collection of populations, and agents in the same population choose strategies from a common finite set, follow an identical revision protocol, and access the same payoff vector. Although each population is protocol-homogenous, its agents are allowed to simultaneously select distinct strategies. The state of a population is a vector whose entries are proportional to the number of agents adopting each strategy. By concatenating the states of all populations, we generate the so-called social state. A causal payoff mechanism determines the payoff vectors of the populations in terms of the social state. Hence, although the agents are noncooperative, their decisions are coupled through the payoff mechanism.

\subsection{Motivation and mixed autonomy congestion~games} \label{subsec:motivation}
Our framework is well-suited to model multi-agent systems in which a large number of agents use the available information, such as the payoff vectors, to autonomously choose and repeatedly revise their strategies according to protocols that express their preferences. In the so-called congestion game originally proposed in~\cite{Beckmann1956Studies-in-the-}, and thoroughly explained in~\cite{Sandholm2010Population-Game} (see also~\cite[Example~2]{Park2019From-population} and~\cite[Example~1]{Park2018Payoff-Dynamic-}), the agents are the drivers commuting across a network of roads. Each population is uniquely tied to an origin-destination pair shared by all its members, and the strategies available to them are the viable routes leading from the origin to the destination. The time saved by adopting a route is its payoff\footnote{The antisymmetric of  travel time from  origin to  destination is a mathematically equivalent representation 
of a strategy's payoff.
}, and the proportions of the agents in each population adopting the available routes form the population states. Under the realistic assumption that as the utilization of a road (measured as the proportion of all agents using it) increases so does the time of travel across it, congestion games satisfy important properties that facilitate their analysis. Namely, congestion games are (i)~potential~\cite{Monderer1996Potential-games} and (ii)~contractive~\cite{Hofbauer2007Stable-games,Hofbauer2009Stable-games-an}, which are properties that have been used to prove the existence of Nash equilibria and construct Lyapunov functions to establish the stability of the equilibria when the agents follow suitable protocols, such as Smith's originally proposed in~\cite{Smith1984The-stability-o} to study traffic assignment problems. 

Throughout the article we  illustrate  our results via two examples. The first is a generalization of congestion games in which, in addition to drivers, there are autonomous vehicles~\cite{LazCooPed17,MehHor19}. As we explain later, existing stability results are not applicable to such mixed autonomy congestion games because they are neither potential nor contractive. Our results, however, allow for what we will define as {\it weighted contractive} games, of which mixed autonomy congestion games are a particular case. 
In the presence of dynamics in the payoff mechanism, standard methods characterize the stability of the equilibria of the game with the help of a potential function. Since no such  potential function exists for mixed autonomy congestion games, we instead use a technique based on Legendre's transform.
In the second example, we  illustrate how our results for weighted contractive games can be applied to study bypassing behaviour near a road split.


The modeling and analysis tools such as those put forth in this article are critical for understanding user behavior and for devising policies for efficient sharing of resources, such as infrastructure. The importance of such studies will increase as the users are diversified (as in the mixed autonomy example) and real-time information that guide their choices proliferate.

\vspace{-.1in}
\subsection{Main Contributions and Comparison To Previous Work}
In this article, we generalize  previous  results characterizing the convergence of the social state towards an appropriately defined Nash-like equilibrium set of the payoff mechanism. 
As we explain in Section~\ref{subsec:deterministicmodels}, stability of the equilibria set is a critical property because it ensures that such a set is a predictor of the long term behavior of the social state.

For the case in which the payoff mechanism is a memoryless map~(from the social state to the payoff vectors) derived from a contractive game\footnote{Although population games satisfying this property were originally called {\it stable} in~\cite{Hofbauer2009Stable-games-an}, we  refer to them as {\it contractive} following the nomenclature in~\cite{Sandholm2015Handbook-of-gam}. This convention is  appropriate because, as illustrated by~\cite[Example~6.1]{Hofbauer2009Stable-games-an}, contractive games may still lead to cycles that exclude Nash equilibria for certain protocols. Furthermore, \cite[Section~V.B~(Fig.1(a))]{Park2018Passivity-and-e} demonstrates that contractivity is also not necessary for GAS of the Nash equilibria.}, seminal work in~\cite{Hofbauer2009Stable-games-an} puts forth a Lyapunov-based approach to establish the global asymptotic stability (GAS) of the set of Nash equilibria of the game for a broad class of revision protocols.  
 
By using system-theoretic passivity concepts, as  introduced in this context in~\cite{Fox2013Population-Game}, subsequent work in~\cite{Park2018Passivity-and-e,Park2019From-population,Park2018Payoff-Dynamic-} proposed a methodology that generalizes the stability results in~\cite{Hofbauer2009Stable-games-an} for dynamical payoff mechanisms. 
 This generalization is important because the {addition of dynamics} in the payoff mechanism {may destabilize} the Nash equilibria of a contractive game for protocols that  would have guaranteed stability in the memoryless case. In fact, even first order {\it smoothing dynamics}~(that smooth short-term fluctuations of the payoff)
 may cause such a destabilization effect. An example is provided in~\cite[Section V.B (Fig.5)]{Park2019From-population} where the revision protocol yields GAS of the Nash equilibria of a contractive game with a memoryless payoff mechanism, but stability is lost when smoothing dynamics are added to the payoff mechanism.

In this paper, we present dissipativity tools that further advance the passivity approach used in~\cite{Fox2013Population-Game,Park2018Passivity-and-e,Park2019From-population,Park2018Payoff-Dynamic-}, and allow us to establish GAS of the set of Nash equilibria for broader classes of payoff mechanisms. There are two sets of results, centered on (i) memoryless payoff mechanisms and (ii) the case in which the payoff mechanism has internal dynamics. Our results for memoryless payoff mechanisms focus on those derived from {\it weighted contractive} games, whose contractiveness properties may differ from one population to another.
The game of congestion with mixed autonomy~\cite{LazCooPed17,MehHor19} alluded to in Section~\ref{subsec:motivation} is weighted contractive and will be used as an example throughout the article. We employ our results to study the stability of the Nash equilibria of such games for a broad class of protocols, and we also consider the case in which the payoff mechanism is modified to include smoothing dynamics. In addition, we propose a numerical method that leverages convex optimization to determine whether a game satisfies the relaxed contraction properties that are consistent with the generalized dissipativity properties.

\vspace{-.05in}
\subsection{Deterministic Models: A Brief Discussion}
\label{subsec:deterministicmodels}

We adopt the deterministic continuous time dynamics model described in~\cite{Park2019From-population,Park2018Payoff-Dynamic-}, which focuses on the {\it mean closed loop model} depicted in Fig.~\ref{Fig:ClosedLoop}. It consists of the feedback interconnection of two nonlinear sub-systems: the first is an evolutionary dynamics model (EDM) that models the effect of the revision protocols, and the second is a payoff dynamics model (PDM) that specifies the payoff mechanism. The state of the EDM and the PDM are the so-called mean social state and deterministic payoff, respectively, that approximate the social state and the payoff vectors when the number of agents tends to infinity as described in~\cite{Park2019From-population,Park2018Payoff-Dynamic-}. Our model assumes that the protocols satisfy the so-called Nash stationarity property, which ensures that the mean social state components of the closed loop equilibria 
coincide with the Nash-like equilibria of the PDM. As is discussed in detail in~\cite[Section~V]{Park2018Payoff-Dynamic-}, the analyses in~\cite{Kurtz1970Solutions-of-Or} and~\cite[Appendix~12.B]{Sandholm2010Population-Game} indicate that the convergence of the social state towards Nash-like equilibria, in the limit of large populations, can be established by doing so for the mean social state. For these reasons, in this article we study the mean closed loop model and leverage dissipativity theory to determine conditions under which the equilibrium set is~GAS. 

\tikzstyle{system} = [draw, ultra thick, minimum width=6em, text centered, rounded corners, fill=black!4,  minimum height=3em]

\def\edgedist{1.1}
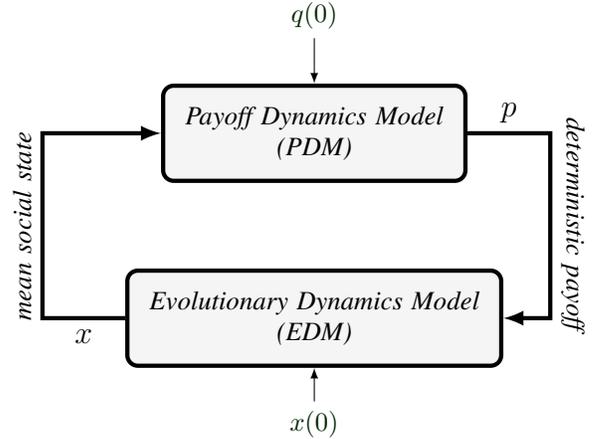
\begin{figure} [t]
\centering
\begin{tikzpicture}[auto, node distance=7em,>=latex]

\node [system, minimum height=8ex] (games) {$\begin{array} {c} \textit{Payoff Dynamics Model} \\ \textit{(PDM)} \end{array}$};
\draw [->] (games)+(0,8ex) node[above,color=green!15!black]{$q(0)$} -- (games);

\node [system, minimum height=8ex, below of=games] (evo_dynamics) {$\begin{array} {c} \textit{Evolutionary Dynamics Model} \\ \textit{(EDM)} \end{array}$};
\draw [->] (evo_dynamics)+(0,-7ex) node[below,color=green!15!black]{$x(0)$} -- (evo_dynamics);

\draw[ultra thick, ->] (games.east) -- ++(1*\edgedist,0) node [pos=.5] {\large $p$} |- (evo_dynamics.east) node [above, pos=.25, rotate=-90] {\textit{deterministic payoff}};
\draw[ultra thick,->] (evo_dynamics.west) -- ++(-1*\edgedist,0) node [pos=.5] {\large $x$} |- (games.west) node [above, pos=.25, rotate=90] {\textit{mean social state}} ;

\end{tikzpicture}
\caption{Diagram representing a feedback interconnection between a PDM and an EDM. The resulting system is referred to as {mean closed loop model}.}
\label{Fig:ClosedLoop}
\end{figure}

The deterministic approach adopted here builds on the extensive body of work on population games and evolutionary dynamics thoroughly discussed in~\cite{Sandholm2010Population-Game,Weibull1995Evolutionary-ga}.
Application of this approach have been reported in the areas of transportation~\cite{Smith1984The-stability-o}, wireless networks~\cite{Tembine2008Evolutionary-ga,Tembine2010Evolutionary-ga,Jian2013Joint-Spectrum-}, optimization~\cite{Li2013Desigining-game}, control systems~\cite{Quijano2017The-role-of-pop}, regulation of heating in buildings~\cite{Obando2014Building-temper}, and smart grid~\cite{Mojica-Nava2014Dynamic-Populat,Pantoja2011A-population-dy}. The analysis in~\cite{Fox2013Population-Game} introduced the concept of dynamic payoff mechanisms in this context, and it also pioneered the use of system-theoretic passivity techniques to characterize stability properties.

\vspace{-.1in}
\subsection{Outline of the Paper} Section~\ref{Sec:1} considers the case where the payoff is memoryless and  defines the basic components of the model.
Section \ref{sec:delta-dissipative}
 introduces the concept of $\delta$-disipativity, which  is then used in Theorem~\ref{main} to specify sufficient conditions for stability.  Section~\ref{weighted} introduces the notion of {\it weighted contractive} games, and shows that these satisfy the conditions of Theorem~\ref{main}. It also proves that a congestion game with mixed autonomy~\cite{LazCooPed17,MehHor19} used throughout the article as an example is weighted contractive.  Section~\ref{sec:Search}  proposes a numerical method that leverages convex optimization to determine whether the payoff satisfies the conditions of Theorem~\ref{main}.
As an illustration
 we check contraction for a game that captures bypassing near a road split  \cite{mehr2018game}.  Section~\ref{sec:DynamicPayoff} extends the  concepts introduced in Section~\ref{Sec:1} to the case in which a payoff dynamics model (PDM) governs the payoff mechanism.  Theorem~\ref{PDMthm} of Section~\ref{sec:DynamicPayoff} replaces Theorem~\ref{main} of Section~\ref{Sec:1} when the payoff mechanism is a PDM instead of being memoryless. Section~\ref{sec:DynamicPayoff} ends by using Theorem~\ref{PDMthm} to characterize  stability for the congestion game with mixed autonomy modified so as to include smoothing dynamics.
 
\vspace{-.05in}
\section{Model Description}\label{Sec:1}


We start by describing concepts used throughout the paper, and defining the elements of the closed loop model. 
In this section we assume the payoff mechanism is {memoryless} and specified by a population game. In Section~\ref{sec:DynamicPayoff}, we extend these results to the case in which the payoff mechanism is a more general~PDM specified by a nonlinear dynamical system. 

We consider $\rho$ populations labeled $\{1,\dots,\rho\}$ and denote the mean state of the $r$th population at time $t$ as
 $x^r(t)\in \mathbb{X}^r$. Here, $\mathbb{X}^r$ is the set of all possible states of population $r$, also commonly called strategy profiles, defined as:
$$ \mathbb{X}^r : =\{ \xi \in \mathbb{R}_{\ge 0}^{n^r}: \ \sum_{i=1}^{n^r}\xi_i=m^r\}, $$
where $m^r$ is a positive constant representing the total ``mass" of the population. The set of strategies available to population $r$ is $\{1,\ldots,n^r \}$ and $x_i^r(t)$ is the mean portion of the population adopting the $i$th strategy at time $t$.
 We define $n:=n^1+\cdots+n^\rho$ and let  $x(t)$ represent the mean social state obtained as the concatenation of the vectors $x^1(t),\dots,x^\rho(t)$. The set $\mathbb{X}$ of all possible social states is defined as:
$$
\mathbb{X}:=\mathbb{X}^1 \times \cdots \mathbb{X}^\rho.
$$
When the time argument is dropped, $x^r$ and $x$  represent a specific state for population $r$ and a specific social state, respectively. { We denote by $T\mathbb{X}$ the tangent space of $\mathbb{X}$, that is 
$
T\mathbb{X}=T\mathbb{X}^1 \times \cdots T\mathbb{X}^\rho$,
where $$T\mathbb{X}^r : =\left\{ z \in \mathbb{R}^{n^r}: \ \sum_{i=1}^{n^r}z_i=0\right\}, \ r=1,\dots,\rho.
$$
}

\begin{definition} {\bf (Memoryless payoff mechanism)}
Given a continuously differentiable map ${F:\mathbb{X} \rightarrow \mathbb{R}^n}$, the corresponding memoryless payoff mechanism, or population game, generates the payoff vector as follows:
\begin{equation} \label{eq:memorylesspayoffmech} p(t) = F\big(x(t) \big), \quad t\geq 0.
\end{equation} The state $\bar{x}$ is a Nash equilibrium if the following holds:
\begin{equation}\label{NE}
v^TF(\bar{x})\le \bar{x}^TF(\bar{x}), \quad  v\in \mathbb{X}.
\end{equation} 
We use $NE(F)$ to denote the set of Nash equilibria of $F$.
\end{definition}

The set $NE(F)$ is nonempty and closed   \cite[Proposition 1.2]{Weibull1995Evolutionary-ga}. In this paper we study the stability of  $NE(F)$ (in the sense of set stability \cite{LSW96}),  when the mean social state evolves according to the dynamics defined below.

\begin{definition} {\bf (EDM)}
We associate with each population, say $r$,  a Lipschitz continuous map ${\nu^r: \mathbb{X}^r\times \mathbb{R}^{n^r} \rightarrow \mathbb{R}^{n^r}}$ that specifies its evolutionary dynamics model (EDM) as follows:
\begin{equation}\label{EDMr}
\dot{x}^r(t)=\nu^r(x^r(t), p^r(t)), \quad t \geq 0
\end{equation}
where $p^r(t) \in \mathbb{R}^{n^r}$ is the $r$th conformal partition of $p(t)$ and the combined vector $x(t)\in \mathbb{X}$ evolves according to
\begin{equation}\label{EDM}
\dot{x}(t)=\nu(x(t), p(t)),  \ \mbox{where} \ \nu(x, p):=\begin{bmatrix} \nu^1(x^1, p^1) \\ \vdots \\ \nu^{\rho}(x^{\rho}, p^{\rho}) \end{bmatrix}.
\end{equation} 
{We assume that $\nu(x,p)$ belongs to the tangent cone~\cite{Sandholm2010Population-Game} to $\mathbb{X}$ at $x\in \mathbb{X}$ for any $p\in \mathbb{R}^n$, so that}
$x(t)$, the solution of~(\ref{EDM}) at time $t$, remains in $\mathbb{X}$ when $x(0)$ is in~$\mathbb{X}$.
\vspace{.05in}
\end{definition}

\begin{remark} {\bf (EDM and deterministic evolutionary dynamics)}
As  discussed in~\cite{Park2018Payoff-Dynamic-}, our definition of EDM is closely related to the concept of deterministic evolutionary dynamics described in~\cite{Sandholm2010Population-Game} specifically for memoryless payoff mechanisms of the form~(\ref{eq:memorylesspayoffmech}). In contrast to our approach in which we view the EDM~(\ref{EDM}) as a dynamical system whose input is $p$ and the output is $x$, the deterministic evolutionary dynamics is defined in~\cite[Section~4.4]{Sandholm2010Population-Game} as a set-valued map that assigns to each $F$ a set of mean social state trajectories \textemdash typically obtained as the solution of the initial value problem applied to the so-called mean dynamics derived in~\cite[Section~4.2.1]{Sandholm2010Population-Game}. The importance of the EDM concept for our approach is twofold: (i) it enables the analysis of the case where the payoff mechanism has internal dynamics specified by a PDM;
(ii) it allows us to break apart the stability analysis into separate steps for
establishing dissipativity properties for the EDM and the PDM, thus enabling a modular approach. 
\end{remark}

The method described in~\cite[Section~4.2.1]{Sandholm2010Population-Game} to obtain the mean dynamics for a given revision protocol can be used with minor changes to determine $\nu$ in~(\ref{EDM}). For instance, \cite[Example~4.3.5]{Sandholm2010Population-Game} can be adapted to obtain the Smith~EDM specified by $\nu^{\text{\tiny Smith}}$ given below for the $r$th population:
\begin{equation} \label{eq:SmithEdm}
\nu^{\text{\tiny Smith},r}_i(x^r,p^r) := \sum_{j=1}^{n^r} x^r_j \left[ p^r_i-p^r_j \right]_+ - x^r_i \left[ p_j^r-p_i^r\right]_+
\end{equation}  
where $\left[ s\right]_+:=\max\{s,0\}$.

The Smith EDM, which is of particular interest to Example~\ref{mixed} (to be discussed later on), is based on the revision protocol used in~\cite{Smith1984The-stability-o} for traffic assignment strategies. As one can infer from~(\ref{eq:SmithEdm}), the rate at which the agents in population $r$ switch from strategy $i$ to $j$ is proportional to $\left[ p_j^r-p_i^r\right]_+$. The following so-called impartial pairwise comparison (IPC) EDM is a generalization of Smith's:
\begin{equation}
\nu^{\text{\tiny IPC},r}_i(x^r,p^r) := \sum_{j=1}^{n^r} x^r_j \phi_i^r \left(p^r_i-p^r_j \right) - x^r_i \phi_j^r \left(p_j^r-p_i^r\right)
\end{equation} where, for each $j$ in $\{1,\ldots,n^r\}$, $\phi^r_j:\mathbb{R} \rightarrow \mathbb{R}_{\geq0}$ is a Lipschitz continuous function for which $\phi^r_j(\tilde{p}) >0$ if $\tilde{p} >0$ and $\phi^r_j(\tilde{p}) =0$ otherwise.

\begin{definition} {\bf (Nash Stationarity)}
We say that  an EDM described as in~(\ref{EDM}) 
has the ``Nash stationarity" property if the following equivalence holds:
\begin{equation}\label{NS}
\nu(x,p)=0 \quad \Leftrightarrow \quad \left[ \ v^Tp\le x^Tp, \quad v\in \mathbb{X}, \ p\in \mathbb{R}^n\ \right].
\end{equation}
\end{definition}
This condition ensures that the rest points for (\ref{EDM}) with $p=F(x)$ are Nash equilibria since, by (\ref{NS}), $x=\bar{x}$ satisfies
$\nu(x,p)=0$ and $p=F(x)$ if and only if (\ref{NE}) holds.
Note that (\ref{NS}) means that,  for each $r\in \{1,\dots,\rho\}$, the following holds:
\begin{equation}
\nu^r(x^r,p^r)=0 \quad \Leftrightarrow \quad \left[ \ \omega^Tp^r\le {x^r}^Tp^r, \quad  \omega \in \mathbb{X}^r \ \right].
\end{equation}
Thus, Nash stationarity of the multi-population EDM (\ref{EDM}) is equivalent to Nash stationarity for the EDM (\ref{EDMr}) of each population.

\begin{remark} {\bf (IPC EDM satisfies~(\ref{NS})) } \label{rem:NSEXamples} Examples of EDM satisfying Nash stationarity include those of the IPC type. Although in this article we frequently refer to the IPC class, there are other\footnote{See~\cite[Chapter~5]{Sandholm2010Population-Game} for more details.} important EDM classes that are Nash stationary, such as the so-called excess payoff target protocol (EPT) EDM. In fact, as we will mention in remarks throughout this article, certain sub-classes of the EPT EDM class  have some of the useful properties of the IPC EDM class.
\end{remark}

\section{$\delta$-Dissipativity and a Stability Theorem}\label{sec:delta-dissipative}

We now introduce the $\delta$-dissipativity property of an EDM, which can be viewed as a generalization of the notion of $\delta$-passivity proposed in~\cite{Fox2013Population-Game} and subsequently used in~\cite{Park2018Passivity-and-e,Park2019From-population,Park2018Payoff-Dynamic-} to  characterize the stability of Nash-like equilibria for the mean closed loop. 

\begin{definition}\label{def:deltaD} 
{\bf ($\delta$-Dissipativity w.s.r. $\Pi$)} An EDM specified by $\nu$ is $\delta$-dissipative with supply rate (w.s.r.) characterized by $\Pi=\Pi^T\in \mathbb{R}^{2n \times 2n}$ if there exist a continuously differentiable storage function ${S: \mathbb{X}\times \mathbb{R}^n \rightarrow \mathbb{R}_{\ge 0}}$ and a nonnegative-valued function $\sigma: \mathbb{X}\times \mathbb{R}^n \rightarrow \mathbb{R}_{\ge 0}$ that satisfy the following inequality for all $x$, $p$ and $u$ in $\mathbb{X}$, $\mathbb{R}^n$ and $\mathbb{R}^n$, respectively:
\begin{subequations}
\begin{eqnarray}\label{dissipation}
\!\!\!\!\!\! && \frac{\partial S(x,p)}{\partial x}\nu(x,p)+\frac{\partial S(x,p)}{\partial p}u \\ && \le -\sigma(x,p)+\begin{bmatrix} u \\ \nu(x,p) \end{bmatrix}^T \Pi \begin{bmatrix} u \\ \nu(x,p) \end{bmatrix} \nonumber,
\end{eqnarray}
where $S$ and $\sigma$ must also satisfy the equivalences below:
\begin{align}\label{negdef}
\sigma(x,p)=0 \quad &\Leftrightarrow \quad \nu(x,p)=0 \\
\label{informative}
S(x,p)=0 \quad &\Leftrightarrow \quad \nu(x,p)=0.
\end{align}
\end{subequations}
\end{definition}
\vspace{.1in}

\begin{remark}\label{rem:deltapassive-vs-dissipative} {\bf (When $\delta$-dissipativity implies $\delta$-passivity)}
If the $n \times n$ block-partitions of $\Pi$ are  $\Pi_{11}=0$, $\Pi_{22}=0$ and $\Pi_{12}=\Pi_{21}=\frac{1}{2}I$, then the inequality in~(\ref{dissipation}) with $u=\dot{p}$ implies  $\delta$-passivity, as defined in~\cite[Section 4.2]{Fox2013Population-Game}. 
The approach in~\cite{Park2019From-population,Park2018Payoff-Dynamic-} requires that the storage function is also ``informative" in an appropriately defined sense. Our additional requirements~(\ref{negdef})-(\ref{informative})  play a similar role. In fact, one can show for the aforementioned choice of $\Pi$ that $\delta$-dissipativity implies both that the EDM is $\delta$-passive and that it has an informative storage function, but the opposite does not hold because informativeness of the storage function does not imply~(\ref{negdef})-(\ref{informative}). The reason for this discrepancy is that our results establish global asymptotic stability, while~\cite{Park2019From-population,Park2018Payoff-Dynamic-} 
also allow  weaker notions of stability.
\end{remark}

The following remark describes important EDM classes that are $\delta$-dissipative.

\begin{remark} \label{rem:passivityIPC} {\bf ($\delta$-dissipativity of IPC EDM)}
As pointed out in Remark~\ref{rem:NSEXamples}, an EDM of the IPC type satisfies Nash stationarity. 
It is also $\delta$-dissipative w.s.r. $\Pi$ as chosen in Remark~\ref{rem:deltapassive-vs-dissipative} with the following storage function $S^{\text{\tiny IPC}}$ and associated $\sigma^{\text{\tiny IPC}}$ :
\begin{subequations}
\begin{align}
S^{\text{\tiny IPC}}(x,p)&:= \sum_{r=1}^\rho S^{\text{\tiny IPC},r}(x^r,p^r), \\ \sigma^{\text{\tiny IPC}}(x,p)&:= \sum_{r=1}^\rho \sigma^{\text{\tiny IPC},r}(x^r,p^r),
\end{align} \end{subequations} where for each population $r$:
\begin{subequations}
\label{eq:IPCEachPopStorageSigma}
\begin{align} 
S^{\text{\tiny IPC},r}(x^r,p^r) & :=\sum_{i=1}^{n^r} \sum_{j=1}^{n^r} x_i^r \int_0^{p_j^r-p_i^r} \phi^r_j(\tilde{p}) d\tilde{p} \\
\sigma^{\text{\tiny IPC},r}(x^r,p^r) &:= -  \sum_{i=1}^{n^r} \nu_i^{\text{\tiny IPC},r}(x^r,p^r) \sum_{j=1}^{n^r} \int_0^{p_j^r-p_i^r} \phi^r_j(\tilde{p}) d\tilde{p}.
\end{align}
\end{subequations}
As discussed in~\cite{Fox2013Population-Game},
the argument in \cite[Appendix~A.4]{Hofbauer2009Stable-games-an} can be used here to show that $\sigma^{\text{\tiny IPC}}$ is nonnegative and satisfies~(\ref{negdef})-(\ref{informative}). 
 We can also invoke an immediate analogy of this analysis, after appropriately modifying the arguments used to prove~\cite[Theorem~5.1]{Hofbauer2009Stable-games-an} and~\cite[Theorem~4.4]{Fox2013Population-Game}, to claim that the so-called separable EPT EDM class is also $\delta$-dissipative w.s.r. $\Pi$, for the choice of $\Pi$ in Remark~\ref{rem:deltapassive-vs-dissipative}.
\end{remark}

\vspace{-.15in}
\subsection{Stability of $NE(F)$ for memoryless payoff mechanisms}
We now investigate the stability of $NE(F)$ when the payoff $p$ accessed by the EDM is obtained from the memoryless map~(\ref{eq:memorylesspayoffmech}), leading to the following system:
\begin{equation} \label{eq:meandynamic}
\dot{x}(t)=\nu \Big( x(t), F \big( x(t) \big) \Big), \quad t \geq 0.
\end{equation}

{For the following theorem we assume a continuously differentiable extension of $F$ from $\mathbb{X}$ to $\mathbb{R}^n$ is available so that the Jacobian matrix $\frac{\partial F(x)}{\partial x}\in \mathbb{R}^{n\times n}$ is well defined for all $x\in \mathbb{X}$.}
\begin{theorem}\label{main} \rm
Suppose that the EDM (\ref{EDM}) is Nash stationary and $\delta$-dissipative w.s.r. $\Pi$, for a pre-selected~$\Pi$. If $F$ satisfies the following condition then $NE(F)$ is a globally asymptotically stable set of~(\ref{eq:meandynamic}):
{
\begin{equation}\label{incremental}
\zeta^T\begin{bmatrix} \frac{\partial F(x)}{\partial x} \\ I \end{bmatrix}^T \Pi  \ \begin{bmatrix} \frac{\partial F(x)}{\partial x} \\ I \ \end{bmatrix} \zeta \le 0 \quad \zeta\in T\mathbb{X}, \ x\in \mathbb{X}.
\end{equation}
}
\end{theorem}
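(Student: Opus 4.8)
The plan is to use the EDM's storage function, evaluated along the closed loop, as a Lyapunov function for the set $NE(F)$, and to close the loop through condition~(\ref{incremental}) so that the supply-rate term is driven to be nonpositive. Concretely, I would set $V(x) := S\big(x,F(x)\big)$. The first thing to check is that this is a legitimate set-Lyapunov candidate: by construction $S\ge 0$, and by the equivalence~(\ref{informative}) together with Nash stationarity~(\ref{NS}) one has $V(x)=0 \Leftrightarrow \nu\big(x,F(x)\big)=0 \Leftrightarrow x\in NE(F)$. Thus $V$ vanishes exactly on $NE(F)$ and is strictly positive off it. This is precisely the place where $\delta$-dissipativity delivers more than plain $\delta$-passivity (cf.\ Remark~\ref{rem:deltapassive-vs-dissipative}): the informativeness requirement~(\ref{informative}) is what pins the zero set of $V$ to $NE(F)$.

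Next I would differentiate $V$ along~(\ref{eq:meandynamic}). With $p(t)=F\big(x(t)\big)$ we have $\dot p = \frac{\partial F(x)}{\partial x}\nu\big(x,F(x)\big)$, so by the chain rule $\dot V = \frac{\partial S}{\partial x}\nu + \frac{\partial S}{\partial p}\frac{\partial F}{\partial x}\nu$. The key observation is that this is exactly the left-hand side of the dissipation inequality~(\ref{dissipation}) evaluated at the particular input $u=\dot p=\frac{\partial F}{\partial x}\nu$. Since~(\ref{dissipation}) holds for all $x,p,u$, this substitution is permitted, and using $\begin{bmatrix}\dot p\\ \nu\end{bmatrix}=\begin{bmatrix}\frac{\partial F}{\partial x}\\ I\end{bmatrix}\nu$ it yields
\[
\dot V \le -\sigma\big(x,F(x)\big) + \nu^T \begin{bmatrix} \frac{\partial F(x)}{\partial x} \\ I \end{bmatrix}^T \Pi \begin{bmatrix} \frac{\partial F(x)}{\partial x} \\ I \end{bmatrix} \nu ,
\]
where $\nu=\nu\big(x,F(x)\big)$.

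I would then invoke~(\ref{incremental}) with $\zeta=\nu\big(x,F(x)\big)$. This is admissible because $\nu\in T\mathbb{X}$: mass conservation on each population forces $\sum_i \nu^r_i=0$, i.e.\ $\nu^r\in T\mathbb{X}^r$. Hence the supply-rate term is nonpositive and we obtain $\dot V \le -\sigma\big(x,F(x)\big)\le 0$, where by~(\ref{negdef}) the right-hand side vanishes only where $\nu\big(x,F(x)\big)=0$, i.e.\ only on $NE(F)$. Convergence then follows from LaSalle's invariance principle, which applies cleanly since $\mathbb{X}$ is a product of simplices and hence compact, so all trajectories are bounded and precompact. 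The set $\{x\in\mathbb{X}:\dot V(x)=0\}$ is contained in $\{\sigma\big(x,F(x)\big)=0\}=NE(F)$, and every point of $NE(F)$ is an equilibrium of~(\ref{eq:meandynamic}); thus the largest invariant set inside $\{\dot V=0\}$ is exactly $NE(F)$, giving global attractivity. Lyapunov stability of the set, in the sense of~\cite{LSW96}, follows from $V$ being continuous, positive definite with respect to $NE(F)$, and nonincreasing; together with attractivity this establishes that $NE(F)$ is GAS.

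I expect the two genuinely delicate points to be the following. First, justifying that $\nu\big(x,F(x)\big)$ lies in $T\mathbb{X}$, so that the memoryless feedback condition~(\ref{incremental}) can be interfaced with the EDM's supply rate by taking $\zeta=\nu$; this is the crux that makes the modular argument work. Second, phrasing the LaSalle and set-stability argument rigorously for an equilibrium set $NE(F)$ that need not reduce to a single point, so that both attractivity and stability (not merely attractivity) are established relative to the whole set rather than to an isolated equilibrium.
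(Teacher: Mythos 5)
Your proposal is correct and follows essentially the same route as the paper: the Lyapunov function $V(x)=S(x,F(x))$, the substitution $u=\frac{\partial F(x)}{\partial x}\nu\big(x,F(x)\big)$, $p=F(x)$ into the dissipation inequality~(\ref{dissipation}), the use of $\nu\big(x,F(x)\big)\in T\mathbb{X}$ to invoke~(\ref{incremental}), and the conclusion $\dot V\le -\sigma\big(x,F(x)\big)$ with the zero sets pinned to $NE(F)$ via~(\ref{negdef}), (\ref{informative}) and Nash stationarity. Your added detail on LaSalle/compactness and set stability merely makes explicit what the paper leaves implicit in declaring $V$ a Lyapunov function for the set.
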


\begin{proof} We show that $V(x)=S(x,F(x))$ serves as a Lyapunov function and guarantees global asymptotic stability of $NE(F)$.  $V(x)$ is nonnegative for all $x\in \mathbb{X}$ and, by (\ref{informative}) and (\ref{NS}), it vanishes only when $x\in NE(F)$. Next, note that
\begin{eqnarray*}
&&\!\!\!\!\!\!\!\!\!\!\!\! \frac{\partial V(x)}{\partial x}\nu(x,F(x))\\
&&\!\!\!\!\!\!\!\!\!\!\!\!\! = \! \left.\left\{\!\frac{\partial S(x,p)}{\partial x}\nu(x,p)\!+\!\frac{\partial S(x,p)}{\partial p}\frac{\partial F(x)}{\partial x}\nu(x,p)\! \right\}\right|_{p=F(x)}\\
&& \!\!\!\!\!\!\!\!\!\!\!\! \le -\sigma(x,F(x))+ \\
&& 
\left.\left\{\begin{bmatrix} u \\ \nu(x,p) \end{bmatrix}^T \Pi \begin{bmatrix} u \\ \nu(x,p) \end{bmatrix} \right\}\right|_{u=\frac{\partial F(x)}{\partial x}\nu(x,p), \, p=F(x)} \\
 &&\!\!\!\!\!\!\!\!\!\!\!\! =
-\sigma(x,F(x)) + \\
 &&\!\!\!\!
\nu(x,F(x))^T  \begin{bmatrix} \frac{\partial F(x)}{\partial x} \\ I \end{bmatrix}^T \Pi \begin{bmatrix} \frac{\partial F(x)}{\partial x} \\ I \end{bmatrix} \nu(x,F(x)),
\end{eqnarray*}
where we have used (\ref{dissipation}). Then, { since $\nu(x,F(x)) \in T\mathbb{X}$,} the inequality (\ref{incremental}) implies
$$
\frac{\partial V(x)}{\partial x}\nu(x,F(x)) \le -\sigma(x,F(x)),
$$
where the right-hand side is nonpositive and, by (\ref{negdef}), vanishes when $x\in NE(F)$. Thus $V(x)=S(x,F(x))$ indeed serves as a Lyapunov function and guarantees global asymptotic stability of $NE(F)$.  
\end{proof}

In the following remark  we use Theorem~\ref{main} to recover key stability results from~\cite{Hofbauer2009Stable-games-an}. To do so, we first note that,  when $\Pi_{11}=0$, the inequality in~(\ref{incremental}) is equivalent to the following incremental quadratic constraint:
\begin{equation}\label{incremental2}
\begin{bmatrix} F(x)-F(y) \\ x-y \end{bmatrix}^T \Pi  \ \begin{bmatrix} F(x)-F(y) \\ x-y  \end{bmatrix} \le 0, \quad  x,y\in \mathbb{X}.
\end{equation}

\begin{remark}\label{rem:Theorem1-rec1} {\bf (Recovering key results from~\cite{Hofbauer2009Stable-games-an})} 
If the $n \times n$ block-partitions of $\Pi$ are  $\Pi_{11}=0$, $\Pi_{22}=0$ and $\Pi_{12}=\Pi_{21}=\frac{1}{2}I$ as previously considered in Remark~\ref{rem:deltapassive-vs-dissipative}, 
then (\ref{incremental2}) becomes the following {\it contraction} inequality:
\begin{equation}\label{con}
(F(x)-F(y))^T(x-y) \le 0, \quad  x,y\in \mathbb{X}.
\end{equation}
Hence, Remark~\ref{rem:passivityIPC} allows us to invoke Theorem~\ref{main} to recover the portions of~\cite[Theorems~5.1~and~7.1]{Hofbauer2009Stable-games-an} that guarantee that $NE(F)$ is a globally asymptotically stable set for~(\ref{eq:meandynamic}) when the EDM is a separable EPT or IPC, provided that $F$ satisfies~(\ref{con}). In this context, it is also relevant to mention that \cite[Corollary~2]{Park2018Payoff-Dynamic-} extends~\cite[Theorem~5.1]{Hofbauer2009Stable-games-an} to the more general class of integrable EPT EDM even when $NE(F)$ is not a singleton.
 \end{remark}

As is noted on the comparison in~\cite[Section~2.4]{Hofbauer2009Stable-games-an}, negative definiteness conditions, more precisely strict diagonal concavity, have been proposed in~\cite{Rosen1965Existence-and-u} to establish the uniqueness of Nash equilibria for certain normal form games. The comparison, however, explains that not only is the context in~\cite{Rosen1965Existence-and-u} rather distinct from what we consider here, but, even in strictly mathematical terms, contractivity is analogous to diagonal concavity only in the very particular case in which $F$ has no own-population interactions. 

\begin{remark}\label{rem:Theorem1-rec2} {\bf (Characterizing $\delta$-passivity surplus)}
Another special case of Theorem \ref{main} is when $\Pi_{11}=0$, $\Pi_{12}=\Pi_{21}=\frac{1}{2}I$, $\Pi_{22}=-\eta I$, $\eta>0$, in which case the EDM is said to have a  ``surplus" of $\delta$-passivity \cite{Park2018Payoff-Dynamic-,Park2019From-population}. With this choice of $\Pi$, the condition in~(\ref{incremental2}) reduces to the following inequality indicating that a commensurate ``deficit" of contraction is  allowed in the payoff model:
$$
(F(x)-F(y))^T(x-y) \le \eta \|x-y\|^2, \quad  x,y\in \mathbb{X}.
 $$As shown in~\cite[Corollary~IV.3]{Park2018Passivity-and-e,Park2018Passivity-and-E2}, the EPT and IPC EDM classes do not have $\delta$-passivity surplus when $n^r \geq 3$. However, there are instances of the so-called perturbed best response\footnote{See~\cite{Hofbauer2007Evolution-in-ga} for a in-depth analysis of the PBR revision protocol.} (PBR) EDM class that have $\delta$-passivity surplus. The analysis of $\delta$-passivity for the PBR EDM class was first put forth in~\cite{Park2018Passivity-and-e,Park2018Passivity-and-E2}, and later extended in~\cite{Park2018Payoff-Dynamic-}. Because this class is not Nash stationary, the analysis of $\delta$-dissipativity of the PBR EDM  is beyond the scope of this article. 
 \end{remark}

\section{Weighted Contractive Games}\label{weighted}

In Remark~\ref{rem:Theorem1-rec1}, we explained how Theorem~\ref{main} recovers known results for contractive games as special cases. To demonstrate the broader applicability of Theorem \ref{main}, we now consider a multi-population game in which the EDM (\ref{EDMr}) is $\delta$-passive for each population, and show that stability of Nash equilibria can be ascertained with a relaxed form of the contraction property (\ref{con}) for the payoff model. We start by stating the following lemma, which assumes the EDM for each population  $r$ is $\delta$-dissipative w.s.r. $\Pi^r$, and constructs a composite $\Pi$ with flexible weights. We state the lemma without proof, as it follows imediately from Definition~\ref{def:deltaD}.

\begin{lemma}\label{multi-passive}\rm 
Suppose that  each population $r\in \{1,\cdots,\rho\}$  of an  EDM (\ref{EDMr}) possesses a storage function $S^r: \mathbb{X}^r\times \mathbb{R}^{n^r} \rightarrow \mathbb{R}_{\ge 0}$ and ${\sigma^r: \mathbb{X}^r\times \mathbb{R}^{n^r} \rightarrow \mathbb{R}_{\ge 0}}$ 
 satisfying the  conditions:
\begin{subequations}
\label{eq:conditionsLemmaMultipop}
\begin{multline}\label{dissipat}
\frac{\partial S^r(x^r,p^r)}{\partial x^r}\nu^r(x^r,p^r)+\frac{\partial S^r(x^r,p^r)}{\partial p^r}u^r 
 \le -\sigma^r(x^r,p^r)
  \\
+\begin{bmatrix} u^r \\ \nu^r(x^r,p^r)\end{bmatrix}^T \begin{bmatrix}\Pi_{11}^r & \Pi_{12}^r \\ \Pi_{21}^r & \Pi_{22}^r \end{bmatrix}\begin{bmatrix} u^r \\ \nu^r(x^r,p^r)\end{bmatrix}, 
\end{multline}
\begin{align}\label{informative-sub}
S^r(x^r,p^r)=0 &\quad \Leftrightarrow \quad \nu^r(x^r,p^r)=0, \\
\sigma^r(x^r,p^r)=0 &\quad \Leftrightarrow \quad \nu^r(x^r,p^r)=0
\end{align} for all $x^r\in \mathbb{X}^r$, $p^r\in \mathbb{R}^{n^r}$, and $u^r\in \mathbb{R}^{n^r}$.
\end{subequations}
Then, for any given choice of positive weights $w^1, \ldots, w^\rho$, the composite storage function 
$$
S(x,p)=\sum_{r=1}^{\rho}w^rS^r(x^r,p^r)
$$
for the multi-population EDM (\ref{EDM}) satisfies~(\ref{dissipation}) with
\begin{equation}\label{general}
\Pi\!=\!\!\begin{bmatrix}w^1\Pi_{11}^1 & & &w^1\Pi_{12}^1& & \\ &\ddots & & & \ddots &  \\   & & w^\rho\Pi_{11}^\rho & & &w^\rho\Pi_{12}^\rho \\ 
w^1\Pi_{21}^1 & & & w^1\Pi_{22}^1& & \\ &\ddots & & & \ddots &  \\   & & w^\rho\Pi_{21}^\rho & & &w^\rho\Pi_{22}^\rho 
\end{bmatrix}.
 \end{equation}
  In particular, if the EDM for each population is $\delta$-passive ($\Pi_{11}^r=\Pi_{22}^r=0$, $\Pi_{12}^r=\Pi_{21}^r=\frac{1}{2}I$), then the multi-population EDM (\ref{EDM}) satisfies~(\ref{dissipation}) with
\begin{subequations}
\label{eq:multipop-passive}
\begin{align}
\Pi_{11}=\Pi_{22}=0, \quad \Pi_{12}=\Pi_{21}=\frac{1}{2}W\\
\label{multiplier}
 W:=\begin{bmatrix}w^1I_{n^1} & & \\ & \ddots & \\ & & w^\rho I_{n^\rho} \end{bmatrix}.
\end{align}
\end{subequations}
\end{lemma}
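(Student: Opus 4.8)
The plan is to exploit the fully decoupled structure of the multi-population EDM together with the separability of the proposed composite storage function. Since $\nu(x,p)$ in (\ref{EDM}) merely stacks the maps $\nu^r(x^r,p^r)$, each depending only on its own population's data $(x^r,p^r)$, and since $S(x,p)=\sum_{r=1}^\rho w^r S^r(x^r,p^r)$ is a weighted sum of terms each depending only on $(x^r,p^r)$, the chain rule produces no cross-population contributions. Concretely, $\frac{\partial S}{\partial x}\nu(x,p)=\sum_r w^r \frac{\partial S^r}{\partial x^r}\nu^r(x^r,p^r)$ and $\frac{\partial S}{\partial p}u = \sum_r w^r \frac{\partial S^r}{\partial p^r}u^r$, where $u^r$ denotes the $r$th conformal partition of $u$.

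First I would multiply each per-population inequality (\ref{dissipat}) by its weight $w^r>0$ and sum over $r$. By the observation above the left-hand side collapses exactly to $\frac{\partial S}{\partial x}\nu + \frac{\partial S}{\partial p}u$, while the supply-rate terms accumulate into $\sum_r w^r \begin{bmatrix} u^r \\ \nu^r \end{bmatrix}^T \Pi^r \begin{bmatrix} u^r \\ \nu^r \end{bmatrix}$ and the dissipation terms into $\sigma(x,p):=\sum_r w^r\sigma^r(x^r,p^r)$, which is nonnegative because each $\sigma^r\ge 0$ and each $w^r>0$.

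The one step requiring care is recognizing this accumulated supply rate as a single quadratic form $\begin{bmatrix} u \\ \nu \end{bmatrix}^T \Pi \begin{bmatrix} u \\ \nu \end{bmatrix}$ with $\Pi$ as in (\ref{general}). Here the stacked vector $[u;\nu]\in\mathbb{R}^{2n}$ is ordered as $(u^1,\dots,u^\rho,\nu^1,\dots,\nu^\rho)$, so the per-population cross terms pairing $u^r$ with $\nu^r$ land in off-diagonal block positions rather than contiguous ones. I would verify by direct block multiplication that the matrix (\ref{general}), whose $(r,r)$ placements in the four quadrants are $w^r\Pi^r_{11}$, $w^r\Pi^r_{12}$, $w^r\Pi^r_{21}$, $w^r\Pi^r_{22}$, reproduces $\sum_r w^r\begin{bmatrix} u^r\\ \nu^r\end{bmatrix}^T\Pi^r\begin{bmatrix} u^r\\ \nu^r\end{bmatrix}$ term by term. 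This is the only place where the permutation between the \emph{stack-then-pair} and \emph{pair-then-stack} orderings must be tracked, and it is essentially the entire substance of the bookkeeping; having done it, (\ref{dissipation}) is established.

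Finally I would confirm the informativeness equivalences (\ref{negdef})-(\ref{informative}) for the composite functions. Because $w^r>0$ and each $S^r,\sigma^r$ is nonnegative, $S(x,p)=0$ holds iff every $S^r(x^r,p^r)=0$, which by (\ref{informative-sub}) holds iff every $\nu^r(x^r,p^r)=0$, i.e.\ iff $\nu(x,p)=0$; the identical argument applied to $\sigma$ gives (\ref{negdef}). The specialization to the $\delta$-passive case is then immediate: substituting $\Pi^r_{11}=\Pi^r_{22}=0$ and $\Pi^r_{12}=\Pi^r_{21}=\frac12 I$ into (\ref{general}) annihilates the diagonal quadrants and leaves only the off-diagonal blocks $\frac12 w^r I_{n^r}$, which assemble into $\Pi_{12}=\Pi_{21}=\frac12 W$ with $W$ as in (\ref{multiplier}), yielding (\ref{eq:multipop-passive}). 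Since no genuine difficulty arises beyond the block-indexing described above, this is consistent with the author's remark that the result follows immediately from Definition~\ref{def:deltaD}.
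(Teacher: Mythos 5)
Your proof is correct and is precisely the routine verification the paper has in mind: the paper states this lemma without proof, remarking that it follows immediately from Definition~\ref{def:deltaD}, and your weighted summation of the per-population inequalities, the block-permutation bookkeeping identifying the accumulated supply rate with the quadratic form defined by (\ref{general}), and the composite informativeness check for $S$ and $\sigma$ constitute exactly that verification. The specialization to the $\delta$-passive case yielding (\ref{eq:multipop-passive}) is also handled correctly, so there is nothing to add.
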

\vspace{.1in}
\begin{remark}\label{rem:IPCIsEachPopDiss} {\bf (Lemma~\ref{multi-passive} applies to IPC EDM)} Using the same analysis as Remark~\ref{rem:passivityIPC}, we infer that $S^{\text{\tiny IPC},r}$ and $\sigma^{\text{\tiny IPC},r}$ for the IPC EDM class, as given in~(\ref{eq:IPCEachPopStorageSigma}), satisfy (\ref{eq:conditionsLemmaMultipop}) with $\Pi_{11}^r=\Pi_{22}^r=0$, $\Pi_{12}^r=\Pi_{21}^r=\frac{1}{2}I_{n^r}$. Therefore the conclusion of Lemma~\ref{multi-passive} holds with (\ref{eq:multipop-passive}) for this class.
\end{remark}

The following corollary to Lemma~\ref{multi-passive} shows that we can leverage the flexible weights $w^1, \ldots, w^\rho$ to relax condition (\ref{incremental}) of
Theorem~\ref{main}. In particular, when the EDM for each population is $\delta$-passive, we can  establish global asymptotic stability for the Nash equilibrium set when the payoff is not necessarily contractive, but becomes so upon an appropriate choice of weights.

\begin{corollary}{\bf (Weighted contraction)} \label{cor} \rm Under the hypotheses of Lemma \ref{multi-passive}, $NE(F)$ is a globally asymptotically stable equilibrium set of~(\ref{eq:meandynamic}) if  there exist positive weights $w^1, \ldots, w^\rho$ with which (\ref{general}) satisfies (\ref{incremental}).
In particular, when the EDM for each population is $\delta$-dissipative with $\Pi_{11}^r=\Pi_{22}^r=0$, $\Pi_{12}^r=\Pi_{21}^r=\frac{1}{2}I_{n^r}$, global asymptotic stability follows if
the following holds:
{
\begin{equation} \label{weighted con}
\zeta^T\left(W\frac{\partial F(x)}{\partial x}+\frac{\partial F(x)}{\partial x}^TW\right)\zeta \le 0, \quad \zeta \in T\mathbb{X}, \ x \in \mathbb{X}
\end{equation}
}
for some $W>0$ of the form (\ref{multiplier}) or, equivalently, 
\begin{equation}
(WF(x)-WF(y))^T(x-y) \le 0, \quad x,y \in \mathbb{X}.
 \end{equation}
 \end{corollary}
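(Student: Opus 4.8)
The plan is to read this corollary as a direct instantiation of Theorem~\ref{main} on the composite system assembled in Lemma~\ref{multi-passive}. Lemma~\ref{multi-passive} already guarantees that the weighted storage function $S(x,p)=\sum_{r=1}^\rho w^r S^r(x^r,p^r)$ satisfies the dissipation inequality~(\ref{dissipation}) with the block-weighted $\Pi$ of~(\ref{general}); what remains in order to invoke Theorem~\ref{main} is to certify that this composite $S$, together with $\sigma(x,p)=\sum_{r=1}^\rho w^r\sigma^r(x^r,p^r)$, is a genuine $\delta$-dissipativity certificate, i.e. that it also inherits the equivalences~(\ref{negdef})--(\ref{informative}), and that the EDM is Nash stationary so that Theorem~\ref{main} applies.

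For the general statement, first I would verify~(\ref{negdef})--(\ref{informative}) for the composite pair. Since $\nu(x,p)=0$ holds if and only if each $\nu^r(x^r,p^r)=0$, and since the per-population hypotheses~(\ref{informative-sub}) of Lemma~\ref{multi-passive} give $S^r=0\Leftrightarrow\nu^r=0$ and $\sigma^r=0\Leftrightarrow\nu^r=0$ with $S^r,\sigma^r\ge 0$, positivity of the weights $w^r$ forces $S(x,p)=0$ (resp. $\sigma(x,p)=0$) exactly when every summand vanishes, hence exactly when $\nu(x,p)=0$. Together with Nash stationarity of~(\ref{EDM})---equivalently, per-population Nash stationarity, as noted after~(\ref{NS})---this makes the multi-population EDM $\delta$-dissipative w.s.r. the $\Pi$ of~(\ref{general}). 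If positive weights exist for which this $\Pi$ satisfies~(\ref{incremental}), Theorem~\ref{main} immediately yields global asymptotic stability of $NE(F)$.

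For the specialized ``in particular'' claim, I would substitute the $\delta$-passive data $\Pi_{11}^r=\Pi_{22}^r=0$, $\Pi_{12}^r=\Pi_{21}^r=\tfrac12 I_{n^r}$, for which Lemma~\ref{multi-passive} delivers~(\ref{eq:multipop-passive}), namely $\Pi_{11}=\Pi_{22}=0$ and $\Pi_{12}=\Pi_{21}=\tfrac12 W$ with $W$ as in~(\ref{multiplier}), into the left-hand side of~(\ref{incremental}). Writing the tall matrix factor against this off-diagonal $\Pi$ and carrying out the block product collapses the quadratic form to $\tfrac12\,\zeta^T\!\left(W\frac{\partial F(x)}{\partial x}+\frac{\partial F(x)}{\partial x}^{T}W\right)\!\zeta$, so~(\ref{incremental}) reduces, after dropping the immaterial positive factor $\tfrac12$, to precisely~(\ref{weighted con}). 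Because $\Pi_{11}=0$ here, I would then invoke the equivalence of~(\ref{incremental}) with the incremental quadratic constraint~(\ref{incremental2}) recorded before Remark~\ref{rem:Theorem1-rec1}; evaluating~(\ref{incremental2}) with this $\Pi$ and using $W=W^T$ contracts the cross terms into $(F(x)-F(y))^T W(x-y)=(WF(x)-WF(y))^T(x-y)$, giving the stated equivalent form. Both inequalities thus express the same condition~(\ref{incremental}), and GAS follows from the general part.

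I expect the algebra---the block multiplication and the symmetric-$W$ bookkeeping---to be entirely routine. The only steps requiring genuine care are the two that Lemma~\ref{multi-passive} does not itself supply: confirming that the equivalences~(\ref{negdef})--(\ref{informative}) survive the passage to the weighted sum (which hinges on $w^r>0$ and the nonnegativity of each $S^r,\sigma^r$), and making explicit that Nash stationarity of the composite EDM is available, since Theorem~\ref{main} needs both ingredients in addition to the dissipation inequality that Lemma~\ref{multi-passive} provides.
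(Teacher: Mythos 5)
Your proposal is correct and follows exactly the route the paper intends: the corollary is stated as an immediate consequence of Lemma~\ref{multi-passive} (supplying the composite storage function and the weighted $\Pi$ of~(\ref{general})) combined with Theorem~\ref{main}, with the block computation collapsing~(\ref{incremental}) to~(\ref{weighted con}) and~(\ref{incremental2}) to the incremental form when $\Pi_{12}=\Pi_{21}=\tfrac12 W$. Your added care in checking that the equivalences~(\ref{negdef})--(\ref{informative}) pass to the weighted sum (via $w^r>0$ and nonnegativity of $S^r,\sigma^r$) and that Nash stationarity must be assumed for Theorem~\ref{main} to apply fills in details the paper leaves implicit, but it is the same argument.
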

  
\subsection{Example: Congestion Game with Mixed Autonomy}
\label{mixed}
As an illustration of weighted contractive games consider a road network, described as a directed graph where each link is a road segment connecting two distinct nodes. Suppose there are $\gamma \ge1$ origin-destination (OD) node pairs and, for each pair $r=1,\dots,\gamma$, there exist $n^r\ge 1$ routes
that traverse no link twice and connect  the origin to the destination.
 
Following \cite{LazCooPed17,MehHor19} we consider two types of vehicles - autonomous and regular - for each OD pair, resulting in $\rho=2\gamma$ populations.
We denote by 
$x^r$,  $r=1,\cdots,\gamma$, the flow vector of autonomous vehicles with OD pair $r$, and by $x^{\gamma+r}$ the flow vector of regular vehicles with OD pair $r=1,\cdots,\gamma$. 
 We further define 
\begin{equation}
\label{qe:xstructureexamplemixed}
\mathbf{x}^{\rm aut}:=\begin{bmatrix} x^1 \\ \vdots \\ x^{\gamma} \end{bmatrix}, \quad \mathbf{x}^{\rm reg}:=\begin{bmatrix} x^{\gamma+1} \\ \vdots \\ x^{2\gamma}\end{bmatrix},  \quad x=\begin{bmatrix} \mathbf{x}^{\rm aut}  \\ \mathbf{x}^{\rm reg}\end{bmatrix}.
\end{equation}
Since the constituent vectors $x^r$ and  $x^{\gamma+r}$ each has $n^r$ entries,  $\mathbf{x}^{\rm aut}$ and $\mathbf{x}^{\rm reg}$ have $N:=n^1+\cdots+n^\gamma$ entries, one for each route.

Next we 
let $L$ denote the number of links in the graph and define the $N\times L$ routing matrix 
$$
R_{i\ell}=\left\{  \begin{array}{ll} 1 & \mbox{if route $i$ traverses link $\ell$} \\ 0 & \mbox{otherwise} \end{array}\right.
$$
and note that $z^{\rm aut}:=R^T\mathbf{x}^{\rm aut}$ and $z^{\rm reg}:=R^T\mathbf{x}^{\rm reg}$ are vectors of link flows, the former due to autonomous vehicles and the latter due to regular vehicles. 
As in \cite{MehHor19} we 
assume the delay on link $\ell$ is an increasing function 
$\Phi_\ell$ of 
$$
z_\ell:=\mu z^{\rm aut}_\ell+z^{\rm reg}_\ell,
$$
where the factor $\mu\in (0,1)$ accounts for the shorter headway maintained by autonomous vehicles. 

Indeed, a shorter headway increases the capacity of the link,
therefore the delay incurred on  link $\ell$ is better represented as a function of $z_\ell$ defined above, instead of the unweighted sum $z^{\rm aut}_\ell+z^{\rm reg}_\ell$ that does not discriminate between autonomous and regular vehicles.
We may then define a cost function for vehicles using route $i$  (autonomous or regular) as the sum of the delays incurred on each link  traversed by route $i$, and assign the negative of the cost function as the payoff:
\begin{equation}\label{payoff}
F_{i}(x)=F_{i+N}(x)=-\sum_{\ell=1}^L R_{i\ell}\Phi_\ell(z_\ell), \quad i=1,\dots,N.
\end{equation}
Then the full payoff vector is
\begin{equation}\label{payoff-vec}
p=F(x)=-\begin{bmatrix}R \\ R \end{bmatrix}\begin{bmatrix} \Phi_1(z_1) \\ \vdots \\ \Phi_L(z_L) \end{bmatrix}
\end{equation}
where
\begin{equation}\label{zdef}
 z=\mu R^T \mathbf{x}^{\rm aut}+R^T \mathbf{x}^{\rm reg}=\begin{bmatrix}R \\ R \end{bmatrix}^T\begin{bmatrix}\mu I & 0 \\ 0 & I \end{bmatrix} x.
\end{equation}
It follows that
\begin{equation}\label{Jacobian}
\frac{\partial F(x)}{\partial x}\!=\!-\begin{bmatrix}R \\ R \end{bmatrix} \!\! \begin{bmatrix} \Phi_1'(z_1) & &  \\  & \ddots &  \\ & & \Phi'_{L}(z_L)\end{bmatrix}\!\!  \begin{bmatrix}R \\ R \end{bmatrix}^T\!\!  \begin{bmatrix}\mu I & 0 \\ 0 & I \end{bmatrix}\!
\end{equation}
where $\Phi_{\ell}'$ denotes the derivative of $\Phi_{\ell}$, $\ell=1,\dots,L$.
Since each $\Phi_\ell$ is an increasing function,  the diagonal entries $\Phi_1', \dots, \Phi'_L$ above are nonnegative,
and
$$
\begin{bmatrix}\mu I & 0 \\ 0 & I \end{bmatrix}\frac{\partial F(x)}{\partial x}
$$
is symmetric and negative semidefinite. Therefore, (\ref{weighted con}) holds with 
\begin{equation}\label{W}
W=\begin{bmatrix}\mu I & 0 \\ 0 & I \end{bmatrix},
\end{equation}
and we conclude from Corollary~\ref{cor} that, if the EDM for each $r=1,\cdots, 2\gamma$ is $\delta$-dissipative with $\Pi_{11}^r=\Pi_{22}^r=0$, $\Pi_{12}^r=\Pi_{21}^r=\frac{1}{2}I_{n^r}$, then the set of Nash equilibria is globally asymptotically stable.
Hence, in light of the Remark~\ref{rem:IPCIsEachPopDiss} and Corollary~\ref{cor}, we can state without proof the following corollary characterizing the stability of $NE(F)$ for Example~\ref{mixed} when the EDM is of the IPC class.
\vspace{.1in}

\begin{corollary} \label{cor:stabilityExample1} If $F$ is as given in (\ref{payoff}), and the EDM is of the IPC class, such as the Smith EDM, then $NE(F)$ is a globally asymptotically stable equilibrium set of~(\ref{eq:meandynamic}).
\end{corollary}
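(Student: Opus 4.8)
The plan is to verify that the hypotheses of Corollary~\ref{cor} hold for this game, so that global asymptotic stability of $NE(F)$ follows at once. First I would invoke Remark~\ref{rem:IPCIsEachPopDiss}: because the EDM is of the IPC class (with the Smith EDM as the special case $\phi^r_j(\tilde p)=[\tilde p]_+$), each population $r$ admits the storage pair $(S^{\text{\tiny IPC},r},\sigma^{\text{\tiny IPC},r})$ of~(\ref{eq:IPCEachPopStorageSigma}), and this pair satisfies the per-population conditions~(\ref{eq:conditionsLemmaMultipop}) with $\Pi_{11}^r=\Pi_{22}^r=0$ and $\Pi_{12}^r=\Pi_{21}^r=\tfrac12 I_{n^r}$. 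The IPC class is also Nash stationary. Hence the per-population $\delta$-passivity premise of Lemma~\ref{multi-passive}, and therefore of Corollary~\ref{cor}, is met, and it only remains to exhibit positive weights $w^1,\dots,w^{2\gamma}$ for which the weighted contraction inequality~(\ref{weighted con}) holds.

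Guided by the asymmetric factor $\mu$ appearing in the effective link flow~(\ref{zdef}), I would choose $w^r=\mu$ for the autonomous populations $r=1,\dots,\gamma$ and $w^r=1$ for the regular populations $r=\gamma+1,\dots,2\gamma$. With these weights the block-diagonal multiplier of the form~(\ref{multiplier}) collapses precisely to $W=\mathrm{diag}(\mu I_N,\,I_N)$, i.e.\ the matrix~(\ref{W}), which is indeed of the admissible form required by Corollary~\ref{cor}.

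The crux is then to check~(\ref{weighted con}) for this $W$. Starting from the Jacobian~(\ref{Jacobian}) and left-multiplying by $W$, the trailing asymmetric factor $\mathrm{diag}(\mu I,I)$ in~(\ref{Jacobian}) gets symmetrized by the leading $W$: I would show $W\frac{\partial F(x)}{\partial x}=-\begin{bmatrix}\mu R\\ R\end{bmatrix} D \begin{bmatrix}\mu R\\ R\end{bmatrix}^T$, where $D=\mathrm{diag}\big(\Phi_1'(z_1),\dots,\Phi_L'(z_L)\big)$. Since each $\Phi_\ell$ is increasing, $D$ has nonnegative diagonal, so this is a symmetric, negative semidefinite Gram-type matrix. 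Consequently $W\frac{\partial F(x)}{\partial x}+\frac{\partial F(x)}{\partial x}^T W = 2\,W\frac{\partial F(x)}{\partial x}$ is negative semidefinite, so~(\ref{weighted con}) holds for every $\zeta\in\mathbb{R}^{2N}$, a fortiori for $\zeta\in T\mathbb{X}$. Invoking Corollary~\ref{cor} then delivers that $NE(F)$ is a globally asymptotically stable set of~(\ref{eq:meandynamic}).

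The only genuinely substantive point is recognizing why this weighting is the right one and why no unweighted argument succeeds: with $W=I$ the matrix $\frac{\partial F(x)}{\partial x}$ still carries the asymmetric factor $\mathrm{diag}(\mu I,I)$ and is in general neither symmetric nor negative semidefinite, so the plain contraction condition~(\ref{con}) fails and Theorem~\ref{main} does not apply. The choice $W=\mathrm{diag}(\mu I,I)$ rebalances the autonomous and regular link contributions so that the product becomes the symmetric form $-BDB^T$ with $B=\begin{bmatrix}\mu R\\ R\end{bmatrix}$. Everything else is routine verification, since the factorization and the sign of $D$ were already assembled in the discussion of Example~\ref{mixed}; the proposed proof merely threads these facts through Remark~\ref{rem:IPCIsEachPopDiss} and Corollary~\ref{cor}.
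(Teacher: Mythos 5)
Your proposal is correct and follows essentially the same route as the paper: it invokes Remark~\ref{rem:IPCIsEachPopDiss} for per-population $\delta$-passivity (and Nash stationarity) of the IPC class, verifies the weighted contraction condition~(\ref{weighted con}) with $W$ as in~(\ref{W}) via the symmetric negative semidefinite factorization $W\frac{\partial F(x)}{\partial x}=-\begin{bmatrix}\mu R\\ R\end{bmatrix}\mathrm{diag}\big(\Phi_1'(z_1),\dots,\Phi_L'(z_L)\big)\begin{bmatrix}\mu R\\ R\end{bmatrix}^T$, and concludes through Corollary~\ref{cor}. This is precisely the argument the paper assembles in Section~\ref{mixed} before stating the corollary without proof.
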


\section{Constant Matrix Parameterizations of the Payoff Jacobian}
\label{sec:Search}
In applications it may be difficult to verify that condition (\ref{incremental}) of Theorem \ref{main} holds for all $x\in \mathbb{X}$. To overcome this difficulty we propose bounding the Jacobian matrix 
$$
J(x):=\frac{\partial F(x)}{\partial x}
$$
within a set parameterized by constant matrices. Such parameterizations include the convex hull:
\begin{eqnarray}
&&\!\!\!\!\!\!\!\!\!\!\!\!\!\!\! {\rm conv}\{ A_1,\dots,A_k \}
:=\left\{ \lambda_1A_1+\cdots +\lambda_kA_k : \right.\\ \nonumber
&&\quad \quad \left. \lambda_i\ge 0, i=1,\dots,k, \ \lambda_1+\cdots+\lambda_k=1 \right\},
\end{eqnarray}
and the conic hull:
\begin{eqnarray}
&&\!\!\!\!\!\!\!\!\!\!\!\!\!\!\!\!\!\!\!
{\rm cone}\{ B_1,\dots, B_s \}
:=\left\{ \theta_1B_1+\cdots +\theta_sB_s  : \right. \\ \nonumber
&&\qquad \qquad \qquad \qquad \quad \left. 
 \theta_i\ge 0, i=1,\dots,s \right\}.
\end{eqnarray}
If $J(x)$ lies in one of these sets or their sum  for all $x\in \mathbb{X}$, then we can ascertain condition (\ref{incremental}) by checking matrix inequalities involving only the constant matrices $A_1,\dots, A_k$, $B_1,\dots, B_s$.
\begin{proposition}\label{prop}\rm
Let $\Pi_{11}=0$ { and let $P\in \mathbb{R}^{n\times n}$ be the orthogonal projection matrix onto $T\mathbb{X}$}. Then either of the following conditions guarantees (\ref{incremental}):

\noindent
i) $J(x)\in {\rm conv}\{ A_1,\dots,A_k \}$  for all $x\in \mathbb{X}$, and
\begin{equation}\label{conv}
P\begin{bmatrix} A_i \\ I \end{bmatrix}^T \Pi  \ \begin{bmatrix}A_i \\ I \ \end{bmatrix}P \le 0 \quad i=1,\cdots,k.
\end{equation}

\noindent
ii) $J(x)\in {\rm cone}\{ B_1,\dots, B_s \}$ for all $x\in \mathbb{X}$, $P\Pi_{22}P\le 0$, and
\begin{equation}\label{cone}
P(\Pi_{12}^TB_i+B_i^T\Pi_{12})P\le 0  \quad i=1,\cdots,s.
\end{equation}

\noindent
iii) $J(x)\in {\rm conv}\{ A_1,\dots,A_k \}+ {\rm cone}\{ B_1,\dots, B_s \}$ for all $x\in \mathbb{X}$, and (\ref{conv}) and (\ref{cone}) hold. \hfill $\Box$
\end{proposition}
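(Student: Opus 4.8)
The plan is to reduce the quadratic inequality (\ref{incremental}), which is posed over the subspace $T\mathbb{X}$, to a matrix inequality over all of $\mathbb{R}^n$ by means of the projection $P$, and then to exploit the fact that the quadratic form is \emph{affine} in the Jacobian once $\Pi_{11}=0$. First I would observe that, because $\Pi=\Pi^T$ with $\Pi_{11}=0$, expanding the congruence gives
$$
\begin{bmatrix} A \\ I \end{bmatrix}^T \Pi \begin{bmatrix} A \\ I \end{bmatrix} = A^T\Pi_{12}+\Pi_{12}^TA+\Pi_{22}=:G(A),
$$
which is affine in the matrix argument $A$, with linear part $A^T\Pi_{12}+\Pi_{12}^TA$ and constant part $\Pi_{22}$.

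Next I would record the key projection identity: since $P$ is the orthogonal projection onto $T\mathbb{X}$, so that $P=P^T=P^2$ and $\mathrm{range}(P)=T\mathbb{X}$, for any symmetric $M$ one has $\zeta^T M\zeta\le 0$ for all $\zeta\in T\mathbb{X}$ if and only if $PMP\le 0$ as a matrix on $\mathbb{R}^n$. Indeed, $\zeta\in T\mathbb{X}$ gives $P\zeta=\zeta$ and hence $\zeta^T M\zeta=\zeta^T PMP\zeta$, while for arbitrary $\eta\in\mathbb{R}^n$ the vector $P\eta$ lies in $T\mathbb{X}$ and $\eta^T PMP\eta=(P\eta)^T M(P\eta)$. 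Applying this with $M=G(J(x))$ shows that (\ref{incremental}) is equivalent to $P\,G(J(x))\,P\le 0$ for every $x\in\mathbb{X}$. Writing $\mathcal{L}(A):=P(A^T\Pi_{12}+\Pi_{12}^TA)P$ and $C:=P\Pi_{22}P$, the target becomes $\mathcal{L}(J(x))+C\le 0$, and both $\mathcal{L}(A_i)+C=P\,G(A_i)\,P$ and $\mathcal{L}(B_i)$ appear verbatim as the left-hand sides of (\ref{conv}) and (\ref{cone}).

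With these two reductions in place the three cases follow from linearity and from the fact that conic combinations of negative semidefinite matrices are negative semidefinite. In case (i) I would write $J(x)=\sum_i\lambda_iA_i$ with $\lambda_i\ge 0$ and $\sum_i\lambda_i=1$; affinity together with $\sum_i\lambda_i=1$ gives $\mathcal{L}(J(x))+C=\sum_i\lambda_i(\mathcal{L}(A_i)+C)=\sum_i\lambda_i\,P\,G(A_i)\,P\le 0$ by (\ref{conv}). In case (ii) I would write $J(x)=\sum_i\theta_iB_i$ with $\theta_i\ge 0$, so that $\mathcal{L}(J(x))+C=\sum_i\theta_i\mathcal{L}(B_i)+C\le 0$, where the sum is controlled by (\ref{cone}) and the constant $C$ by the extra hypothesis $P\Pi_{22}P\le 0$. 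In case (iii) I would split $J(x)=\sum_i\lambda_iA_i+\sum_j\theta_jB_j$ and combine the two previous estimates, absorbing $C$ into the convex-hull part via $\sum_i\lambda_i=1$. In every case $P\,G(J(x))\,P\le 0$, so (\ref{incremental}) holds and Theorem~\ref{main} yields the conclusion.

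I expect the main subtlety to be the bookkeeping of the constant term $\Pi_{22}$, which is precisely what distinguishes the three cases. In the convex case the weights sum to one, so the constant is distributed across the vertices and is automatically controlled by (\ref{conv}); in the pure conic case the multipliers $\theta_i$ have no total constraint, so the constant cannot be absorbed and must be signed separately, which is exactly why case (ii) carries the additional hypothesis $P\Pi_{22}P\le 0$ that cases (i) and (iii) do not require. The only other point demanding care is the equivalence between the subspace inequality and the projected matrix inequality, where it is essential that $P$ be the \emph{orthogonal} projection so that $P=P^T$ and the identity $\zeta^T M\zeta=\zeta^T PMP\zeta$ is valid.
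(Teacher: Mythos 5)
Your proof is correct. The paper states Proposition~\ref{prop} without proof (the terminal $\Box$ signals it is regarded as immediate), so there is no written argument to compare against; what you give is exactly the straightforward reasoning that omission presupposes: with $\Pi_{11}=0$ the form $\begin{bmatrix} A \\ I \end{bmatrix}^T\Pi\begin{bmatrix} A \\ I \end{bmatrix}$ is affine in $A$, the subspace inequality over $T\mathbb{X}$ is equivalent to the projected matrix inequality via $P=P^T=P^2$, and the three cases follow by distributing the constant term $P\Pi_{22}P$ over convex weights or signing it separately. Your closing observation --- that the convex weights summing to one is what absorbs $\Pi_{22}$ in cases (i) and (iii), whereas the unconstrained conic multipliers in case (ii) force the extra hypothesis $P\Pi_{22}P\le 0$ --- correctly pinpoints the only structural content of the proposition.
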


As an illustration, the Jacobian (\ref{Jacobian}) in Example \ref{mixed} can be rewritten as
$$
J(x)=\sum_{\ell=1}^L \Phi'_\ell(z_\ell) B_\ell,
$$
where
$$
B_\ell=\begin{bmatrix}R_\ell \\ R_\ell \end{bmatrix} \begin{bmatrix}\mu R_\ell^T & R_\ell^T \end{bmatrix}
$$
and $R_\ell$ denotes the $\ell$th column of the routing matrix $R$. Since $\Phi'_\ell(z_\ell)\ge 0$ for each $\ell$, we conclude that $J(x)\in {\rm cone}\{ B_1,\dots, B_L \}$, and (\ref{cone}) holds with $\Pi_{12}=W$ specified in (\ref{W}).

Recall that, in Section \ref{weighted}, we considered multi-population games where the EDM for each population satisfies a $\delta$-dissipativity property, leading to  the form  of $\Pi$ in (\ref{general}) with flexible weights $w^r>0$, $r=1,\cdots,\rho$. 
Since $\Pi$ depends linearly on these weights, conditions (\ref{conv})-(\ref{cone}) become linear matrix inequalities (LMIs) with decision variables $w^r>0$, $r=1,\cdots,\rho$.  Thus, we can search for weights satisfying (\ref{incremental}) numerically with convex programming software, such as CVX \cite{cvx}.

One may also encounter situations where
 the Jacobian belongs to a convex set of the form:
\begin{eqnarray}\nonumber
&& \!\!\!\!\!\!\!\!\!\!\!\!\!\! {\rm box}\{ G_0, G_1,\dots, G_d \}
:=\left\{ G_0+ \gamma_1G_1+\cdots +\gamma_dG_d  : \right. \\ \label{box}
&& \qquad \qquad  \qquad \qquad \left. \gamma_i \in [0,1], i=1,\dots,d \right\}.
\end{eqnarray}
Although we can apply Proposition \ref{prop}(i) to the matrices that form the vertices of this set, this application involves $k=2^d$ vertices and may become intractable for large $d$. We next propose an alternative test to check (\ref{incremental}) that involves only $d+1$ matrices, $G_0, G_1,\dots, G_d$:
\begin{proposition}\label{prop2}\rm
Let $\Pi_{11}=0$ and {and define $P\in \mathbb{R}^{n\times n}$ as in Proposition \ref{prop}.} Suppose $J(x)\in {\rm box}\{ G_0, G_1,\dots, G_d \}$ where 
\begin{equation}\label{BC}
G_i=C_iD_i^T, \quad i=1,\dots,d,
\end{equation}
with $C_i,D_i\in \mathbb{R}^{n\times \varrho_i}$, $\varrho_i$ denoting the rank of $G_i$.
 Then (\ref{incremental}) holds if there exist constants $\omega_1,\cdots, \omega_d>0$ s.t.
\begin{equation}\label{Scheck}
\begin{bmatrix} 
P(\Pi_{12}^TG_0+G_0^T\Pi_{12}+\Pi_{22})P & P(\Pi_{12}^TC+D\Omega) \\ (C^T\Pi_{12}+\Omega D^T)P & -2\Omega
\end{bmatrix}
\le 0,
\end{equation}
where $C:=\begin{bmatrix} C_1 \cdots C_d \end{bmatrix}$, $D:=\begin{bmatrix} D_1 \cdots D_d \end{bmatrix}$, and
\begin{equation}\label{Omega}
 \Omega:=\begin{bmatrix} \omega_1 I_{\varrho_1} & & \\ & \ddots & \\ & & \omega_d I_{\varrho_d} \end{bmatrix}.
\end{equation}
\end{proposition}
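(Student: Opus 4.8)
The plan is to reduce (\ref{incremental}) to a single constant-matrix inequality and then recognize that inequality as the Schur complement of the block matrix in (\ref{Scheck}). First I would simplify the left-hand side of (\ref{incremental}). Since $\Pi_{11}=0$ and $\Pi_{21}=\Pi_{12}^T$, the quadratic form equals $\zeta^T M(x)\zeta$ with $M(x):=\Pi_{12}^T J(x)+J(x)^T\Pi_{12}+\Pi_{22}$. Because $P$ is the orthogonal projector onto $T\mathbb{X}$, every $\zeta\in T\mathbb{X}$ satisfies $\zeta=P\zeta$, so (\ref{incremental}) is equivalent to the projected matrix inequality $PM(x)P\le 0$ holding for all $x\in\mathbb{X}$. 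The goal is therefore to bound $PM(x)P$ uniformly over the box.

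Next I would re-center the box. Writing each coordinate as $\gamma_i=\tfrac12(1+\beta_i)$ with $\beta_i\in[-1,1]$ gives $J(x)=\widehat G_0+\tfrac12\sum_{i=1}^d\beta_i G_i$, where $\widehat G_0:=G_0+\tfrac12\sum_i G_i$ is the center of the box. Substituting this together with (\ref{BC}) yields
\[
M(x)=\widehat M_0+\tfrac12\sum_{i=1}^d\beta_i\bigl(\Pi_{12}^TC_iD_i^T+D_iC_i^T\Pi_{12}\bigr),
\]
with $\widehat M_0=\Pi_{12}^TG_0+G_0^T\Pi_{12}+\Pi_{22}+\tfrac12\sum_i(\Pi_{12}^TG_i+G_i^T\Pi_{12})$. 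The centering step is exactly what generates the cross term $\tfrac12(\Pi_{12}^T\sum_iG_i+\sum_iG_i^T\Pi_{12})$ that must appear when the top-left block of (\ref{Scheck}) is assembled.

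The core step is a completion-of-squares (Young) bound on each rank term. For every $\omega_i>0$, expanding $(\tfrac{1}{\sqrt{\omega_i}}C_i^T\Pi_{12}-\sqrt{\omega_i}\beta_iD_i^T)^T(\tfrac{1}{\sqrt{\omega_i}}C_i^T\Pi_{12}-\sqrt{\omega_i}\beta_iD_i^T)\ge 0$ gives $\beta_i(\Pi_{12}^TC_iD_i^T+D_iC_i^T\Pi_{12})\le\tfrac{1}{\omega_i}\Pi_{12}^TC_iC_i^T\Pi_{12}+\omega_i\beta_i^2 D_iD_i^T$. Using $\beta_i^2\le 1$ and $D_iD_i^T\ge 0$ to replace $\omega_i\beta_i^2D_iD_i^T$ by $\omega_iD_iD_i^T$ removes all dependence on $\beta$, so that $M(x)\le \widehat M_0+\tfrac12\sum_i(\tfrac1{\omega_i}\Pi_{12}^TC_iC_i^T\Pi_{12}+\omega_iD_iD_i^T)$ for every $x\in\mathbb{X}$. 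Projecting by $P$ and collecting the blocks $C=[C_1\cdots C_d]$, $D=[D_1\cdots D_d]$ and $\Omega$ from (\ref{Omega})---using the block-diagonal identities $CD^T=\sum_iG_i$, $C\Omega^{-1}C^T=\sum_i\tfrac1{\omega_i}C_iC_i^T$ and $D\Omega D^T=\sum_i\omega_iD_iD_i^T$---this upper bound is precisely the Schur complement of the $(-2\Omega)$ corner of the matrix in (\ref{Scheck}). Since $-2\Omega<0$, condition (\ref{Scheck}) is equivalent to that Schur complement being $\le 0$, so $PM(x)P\le 0$ and hence (\ref{incremental}) follows.

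I expect the main obstacle to be the exact bookkeeping in the last step: reproducing both the factor $\tfrac12$ and the cross term of (\ref{Scheck}) requires centering the box first, so that the $\beta_i\in[-1,1]$ parametrization---rather than the original $\gamma_i\in[0,1]$---is what makes the Young bound line up term-by-term with the assembled block matrix. Once the centering is in place, verifying the identities for $CD^T$, $C\Omega^{-1}C^T$ and $D\Omega D^T$ from the block-diagonal form of $\Omega$ is the remaining routine check.
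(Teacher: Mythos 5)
Your proof is correct, but it takes a genuinely different route from the paper's. The paper keeps the original parametrization $\gamma_i\in[0,1]$, introduces auxiliary vectors $y_i:=\gamma_i D_i^T P x$, converts box membership into the quadratic constraints $y_i^T D_i^T P x\ge y_i^T y_i$ (using $\gamma_i\ge\gamma_i^2$), and then invokes the S-procedure from \cite{LMI}: the multipliers $\omega_i$ combine the constraint quadratic forms with the target quadratic form, and the resulting aggregated matrix inequality is exactly (\ref{Scheck}). You instead re-center the box so that the uncertainty enters through $\beta_i\in[-1,1]$, eliminate each uncertain term with a matrix Young/completion-of-squares bound $\beta_i(\Pi_{12}^T C_i D_i^T + D_i C_i^T\Pi_{12})\le \tfrac{1}{\omega_i}\Pi_{12}^T C_i C_i^T \Pi_{12}+\omega_i D_i D_i^T$, and then recognize the resulting constant upper bound on $PM(x)P$ as precisely the Schur complement of the $-2\Omega$ corner of (\ref{Scheck}); I verified the block bookkeeping (including the cross term $\tfrac12(\Pi_{12}^T C D^T + D C^T \Pi_{12})$ generated by the centering) and it matches exactly. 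The two arguments use the same multipliers and arrive at the same LMI---not surprising, since the sufficiency half of the S-procedure is itself a completion-of-squares argument---but yours is more self-contained: the Schur-complement equivalence with a negative definite corner block is elementary, so no external lemma is needed, and your derivation makes transparent why the LMI has exactly that block structure (its Schur complement is the worst-case value of the projected Jacobian form over the box). The paper's formulation stays closer to the standard robust-control template with auxiliary variables, which adapts more directly if the set containing $J(x)$ were described by other quadratic constraints rather than a box.
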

Note that  (\ref{Scheck}) is linear  in $\Omega$ and $\Pi$. Thus, the search for nonnegative constants $\omega_1,\cdots, \omega_d$ in (\ref{Omega}) to satisfy (\ref{Scheck}) can be performed numerically with LMI solvers. This search can also be combined with a simultaneous search for the weights  $w^1,\cdots,w^\rho$ when $\Pi$ has the form (\ref{general}) arising in multi-population games.

\vspace{-.1in}
\subsection{Example: Bypassing Near a Road Split}
Reference \cite{mehr2018game} developed a game theoretic model of lane changing behavior as drivers approach traffic diverges. In this model each vehicle selects lanes according to an appropriately defined payoff that accounts for crossing effects due to bypassing vehicles and the additional distance traveled by such vehicles. Here we will  use the numerical methods proposed above to show that this payoff model is contractive.
 Consider two populations of vehicles approaching a split, where population~$1$ is headed towards the first branch and population~$2$  towards the second. Following \cite{mehr2018game} we consider two strategies for each population: {\it steadfast} behavior where the vehicle stays on the lane destined to the branch, and {\it bypassing} behavior where the vehicle uses the other lane and merges with the correct lane before the split (see Figure \ref{Fig:split}). 

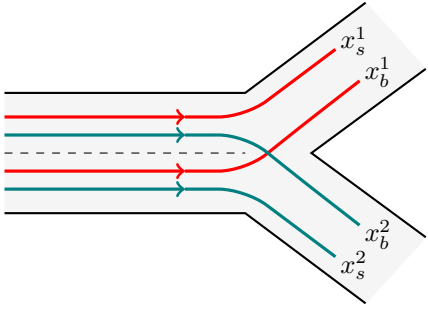
\begin{figure}[h] 
\begin{center}
\begin{tikzpicture}[scale=.08]
\coordinate (A1) at (0,10);
\coordinate (A2) at (40,10);
\coordinate (A3) at (60,25);
\coordinate (A4) at (70,14);
\coordinate (A5) at (51,0);
\coordinate (A6) at (0,-10);
\coordinate (A7) at (40,-10);
\coordinate (A8) at (60,-25);
\coordinate (A9) at (70,-14);
\draw[fill=black!4,black!4] (A1) -- (A2) -- (A3) -- (A4) -- (A5) -- (A9) -- (A8) -- (A7) -- (A6);

\draw[thin,dashed] (0,0) -- (40,0);

\draw[very thick,red,->] (0,6) -- (30,6);
\draw[very thick,red] (30,6)  [rounded corners=10pt] -- (40,6) -- (55,17.25) node[above right,black,yshift=-6,xshift=-2]{$x_s^1$};

\draw[very thick,red,->] (0,-3) -- (30,-3);
\draw[very thick,red] (30,-3)  [rounded corners=10pt] -- (40,-3) -- (50,5) -- (59,12) node[above right,black,yshift=-6,xshift=-2]{$x_b^1$};

\draw[very thick,teal,->] (0,-6) -- (30,-6);
\draw[very thick,teal] (30,-6)  [rounded corners=10pt] -- (40,-6) -- (55,-17.25) node[below right,black,yshift=6,xshift=-2]{$x_s^2$};

\draw[very thick,teal,->] (0,3) -- (30,3);
\draw[very thick,teal] (30,3)  [rounded corners=10pt] -- (40,3) -- (50,-5) -- (59,-12) node[below right,black,yshift=6,xshift=-2]{$x_b^2$};

\draw[thick] (A1) -- (A2) -- (A3);
\draw[thick] (A4) -- (A5) -- (A9);
\draw[thick] (A8) -- (A7) -- (A6);
\end{tikzpicture}
\end{center}
\caption{Lane changing behavior near a road diverge: {\it steadfast} drivers stay on the lane leading to their destination, while {\it bypassing} drivers use the other lane before merging with the correct one.}
\label{Fig:split}
\end{figure}

Let $x=[x^1_s \ x^1_b \ x^2_s \ x^2_b]^T$ where
$x_s^i$ and $x_b^i$ are the flows of steadfast and bypassing vehicles headed to branch $i=1,2$, and assume they are normalized by the total flow. That is, $x_s^i+x_b^i=m^i$, $i=1,2$, and $m^1+m^2=1$. Reference \cite{mehr2018game} proposes the payoff
\begin{equation}\label{Fex2}
F(x)=-\begin{bmatrix} c_1^t(x^1_s+x^2_b)+c_1^cx_b^1(x^1_s+x^2_b) \\ c_2^t(x^2_s+\vartheta_1x^1_b)+c_2^cx_b^2(x^2_s+x^1_b) \\  c_2^t(x^2_s+x^1_b)+c_2^cx_b^2(x^2_s+x^1_b) \\ c_1^t(x^1_s+\vartheta_2x^2_b)+c_1^cx_b^1(x^1_s+x^2_b)   \end{bmatrix}
\end{equation}
where $c_i^t>0$, $c_i^s>0$, and $\vartheta_i > 1$. Here $x^1_s+x^2_b$ is the fraction of the total flow using the lane destined to branch $1$ and $c_1^t(x^1_s+x^2_b)$ is the cost of traversing this lane. For bypassing vehicles the traversal cost is modified as  $c_1^t(x^1_s+\vartheta_2x^2_b)$, where $\vartheta_2>1$ accounts for the additional distance traveled.
The other term, $c_1^cx_b^1(x^1_s+x^2_b)$, appearing in the first and fourth entries of $F$ is the cost of crossing effects due to bypassing vehicles $x^1_b$ merging to the lane for branch~$1$.

Recall $x_s^i$ and $x_b^i$, $i=1,2$, are normalized by the total flow, and define
$
\gamma_1=x_s^1, \ \gamma_2=x_b^1, \ \gamma_3=x_s^2,  \ \gamma_4=x_b^2
$
so that $\gamma_i\in [0,1]$, $i=1,2,3,4$. Let $e_1,e_2,e_3, e_4$ denote the unit vectors in $\mathbb{R}^4$. Then, the Jacobian of (\ref{Fex2}) can be written as in (\ref{box})-(\ref{BC}) with $d=4$,  $C_1=-(e_1+e_4)$, $C_2=-\begin{bmatrix} e_1+e_4 & e_2+e_3 \end{bmatrix}$, $C_3=-(e_2+e_3)$, $C_4=-\begin{bmatrix} e_1+e_4 & e_2+e_3 \end{bmatrix}$, $D_1=c_1^ce_2$, 
$D_2=\begin{bmatrix} c_1^c(e_1+e_4) & c_2^ce_4 \end{bmatrix}$,  $D_3=c_2^ce_4$, $D_4=\begin{bmatrix} c_1^ce_2 & c_2^c(e_2+e_3) \end{bmatrix}$, 
$$
G_0=-
\begin{bmatrix}
 c_1^t & 0 & 0 & c_1^t
 \\ 0 & c_2^t\vartheta_1 & c_2^t & 0 
 \\ 0 & c_2^t & c_2^t & 0 
 \\ c_1^t & 0 & 0 & c_1^t\vartheta_2 
 \end{bmatrix}.
$$
For the values $c_1^t=c_2^t=c_1^c=c_2^c=1$, $\vartheta_1=\vartheta_2=2.7$ obtained from data in \cite{mehr2018game}, we ascertained using CVX \cite{cvx} that the LMI (\ref{Scheck}) is feasible for $\Pi_{12}=\frac{1}{2}I$, $\Pi_{22}=0$.
Therefore, the payoff (\ref{Fex2}) is contractive. Stability of Nash equilibria can then be established from the $\delta$-passivity of the EDM as in Theorem \ref{main} and the ensuing discussion, complementing the static analysis in \cite{mehr2018game} for the existence and uniqueness of a Nash equilibrium.

 \section{Dynamical Models for Payoff}
 \label{sec:DynamicPayoff}
We next consider the situation where, instead of the static model $p=F(x)$, the payoff evolves according to a dynamical model of the form
\begin{subequations}\label{PDM}
\begin{eqnarray}\label{PDM1}
\dot{q}(t)&=&f\big(q(t),x(t)\big)\\
p(t)&=&h\big(q(t),x(t)\big),\label{PDM2}
\end{eqnarray}
\end{subequations}
where $q(t)\in \mathbb{R}^o$, while ${f: \mathbb{R}^o \times \mathbb{X} \rightarrow  \mathbb{R}^o}$ and ${h: \mathbb{R}^o \times \mathbb{X} \rightarrow  \mathbb{R}^n}$ are Lipschitz continuous maps.

\begin{definition} {\bf (PDM)} Following \cite[Definition~4]{Park2018Payoff-Dynamic-} we refer to (\ref{PDM}) as the Payoff Dynamics Model (PDM), and assume that it recovers the static model $p=F(x)$ in steady-state, that is,
\begin{equation}\label{ssIO}
f(q,x)=0 \quad \Rightarrow \quad h(q,x)=F(x).
\end{equation}
\end{definition}

We now  generalize Theorem~\ref{main} to  the following closed loop model for the EDM~(\ref{EDM}) in feedback with~(\ref{PDM}):
\begin{subequations}
\label{eq:MeanClosedLoopModel}
\begin{align}
\dot{x}(t) & = \nu \Big ( x(t), h\big(q(t),x(t)\big) \Big), \\
\dot{q}(t)&=f\big(q(t),x(t)\big), \quad t \geq 0.
\end{align}
\end{subequations}

\begin{theorem}\rm \label{PDMthm}
Suppose that a Nash stationary EDM (\ref{EDM}), a PDM (\ref{PDM}) satisfying (\ref{ssIO}), and $\Pi=\Pi^T$ in $\mathbb{R}^{n \times n}$ are given. Under these conditions, the $x$ components of the rest points of~(\ref{eq:MeanClosedLoopModel}) constitute the set of Nash equilibria $NE(F)$. Moreover, the set of rest points is  globally asymptotically stable  for~(\ref{eq:MeanClosedLoopModel}) if the EDM is $\delta$-dissipative w.s.r. $\Pi$ and the PDM has a storage function $Q: \mathbb{R}^o \times \mathbb{X} \rightarrow  \mathbb{R}_{\ge 0}$ and a nonnegative map $\varsigma: \mathbb{R}^o \times \mathbb{X} \rightarrow  \mathbb{R}_{\ge 0}$ such that the following holds for all $q\in \mathbb{R}^o$, $x\in \mathbb{X}$, and {$\zeta \in T\mathbb{X}$}:
\begin{subequations}
\label{eq:DeltaAntidissipat}
\begin{align}\label{PDMdissip}
 \frac{\partial Q(q,x)}{\partial q}f(q,x)+\frac{\partial Q(q,x)}{\partial x}\zeta & \le -\varsigma(q,x) -\psi^T\Pi \psi \\
\label{vanish1} 
Q(q,x) = 0 \ & \Leftrightarrow \ f(q,x)=0 \\
\label{vanish2}
\varsigma(q,x)  =0 \ & \Leftrightarrow \ f(q,x)=0,
\end{align}
\end{subequations}
where $\psi$ is defined as:
\begin{equation}\label{psi-def}
\psi:=\begin{bmatrix} \frac{\partial h(q,x)}{\partial q}f(q,x)+\frac{\partial h(q,x)}{\partial x}\zeta \\ \zeta \end{bmatrix}.
\end{equation}
\end{theorem}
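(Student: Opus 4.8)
The plan is to mirror the proof of Theorem~\ref{main}, but with a composite storage function whose two pieces exchange supply across the feedback loop.

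First I would characterize the rest points. A rest point of~(\ref{eq:MeanClosedLoopModel}) is a pair $(\bar x,\bar q)$ with $f(\bar q,\bar x)=0$ and $\nu\big(\bar x,h(\bar q,\bar x)\big)=0$. Condition~(\ref{ssIO}) forces $h(\bar q,\bar x)=F(\bar x)$, so $\nu(\bar x,F(\bar x))=0$, and Nash stationarity~(\ref{NS}) gives $\bar x\in NE(F)$. Conversely, for any $\bar x\in NE(F)$ one picks a steady state $\bar q$ of $\dot q=f(q,\bar x)$ (available for the PDMs under consideration); then~(\ref{ssIO}) yields $h(\bar q,\bar x)=F(\bar x)$ and~(\ref{NS}) yields $\nu(\bar x,F(\bar x))=0$, so $(\bar x,\bar q)$ is a rest point. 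This proves the first assertion, and I would write $\mathcal A$ for the resulting rest-point set.

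Next I would introduce the candidate $V(x,q):=S\big(x,h(q,x)\big)+Q(q,x)$. Nonnegativity is inherited from $S,Q\ge 0$; moreover, by~(\ref{informative}) applied at $p=h(q,x)$ and by~(\ref{vanish1}), $V(x,q)=0$ exactly when $\nu(x,h(q,x))=0$ and $f(q,x)=0$, that is, precisely on $\mathcal A$, so $V$ is positive definite relative to $\mathcal A$. The crux is the derivative computation. Along~(\ref{eq:MeanClosedLoopModel}) the payoff obeys $\dot p=\frac{\partial h}{\partial q}f+\frac{\partial h}{\partial x}\nu$, so that $\frac{d}{dt}S(x,h(q,x))=\frac{\partial S}{\partial x}\nu+\frac{\partial S}{\partial p}\dot p$ and $\frac{d}{dt}Q(q,x)=\frac{\partial Q}{\partial q}f+\frac{\partial Q}{\partial x}\nu$. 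The key move is to evaluate the PDM inequality~(\ref{PDMdissip}) at $\zeta=\nu(x,h(q,x))$---legitimate because $\nu$ lies in the tangent cone to $\mathbb{X}$, hence in $T\mathbb{X}$---which makes $\psi=[\dot p;\,\nu]$ in~(\ref{psi-def}); and to apply the EDM inequality~(\ref{dissipation}) with $u=\dot p$, whose right-hand side contains $+[\dot p;\,\nu]^T\Pi[\dot p;\,\nu]$. Summing the two inequalities, the indefinite quadratic terms $\pm[\dot p;\,\nu]^T\Pi[\dot p;\,\nu]$ cancel exactly, leaving $\dot V\le -\sigma(x,h(q,x))-\varsigma(q,x)\le 0$. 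This cancellation of the two supply rates is the whole point, and it is why~(\ref{PDMdissip}) carries $\Pi$ with the opposite sign to~(\ref{dissipation}).

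Finally I would conclude global asymptotic stability. Since $\dot V=0$ forces $\sigma=0$ and $\varsigma=0$, conditions~(\ref{negdef}) and~(\ref{vanish2}) identify $\{\dot V=0\}$ with $\mathcal A$; together with positive definiteness of $V$ relative to $\mathcal A$, a LaSalle invariance argument for set stability in the sense of~\cite{LSW96} gives GAS of $\mathcal A$. I expect the main obstacle to be global attractivity rather than the infinitesimal identity: because $x$ lives in the compact set $\mathbb{X}$ while $q$ ranges over the unbounded $\mathbb{R}^o$, LaSalle needs precompact trajectories, which requires $V$---through $Q$---to be proper (radially unbounded) in $q$ uniformly over $x\in\mathbb{X}$, so that sublevel sets are bounded. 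I would therefore invoke this properness as a standing property of $Q$ (or verify it for the PDMs of interest) and treat the remaining steps as routine.
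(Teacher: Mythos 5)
Your proposal follows essentially the same route as the paper's own proof: the same composite Lyapunov function $V(x,q)=S\big(x,h(q,x)\big)+Q(q,x)$, the same evaluation of~(\ref{PDMdissip}) at $\zeta=\nu\big(x,h(q,x)\big)$ and of~(\ref{dissipation}) at $u=\frac{\partial h}{\partial q}f+\frac{\partial h}{\partial x}\nu$ so that the two quadratic supply terms cancel, and the same conclusion from $\dot V\le -\sigma-\varsigma$ together with~(\ref{negdef}) and~(\ref{vanish2}). Your closing observation, that global attractivity additionally needs properness of $Q$ in the unbounded variable $q$ (to guarantee bounded sublevel sets and precompact trajectories), is a legitimate refinement of a point the paper's proof passes over silently.
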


Before we give a proof, in the following remark we compare~(\ref{eq:DeltaAntidissipat}) with the related concept of $\delta$-antipassivity, as defined in~\cite[Definition~12]{Park2019From-population}, and also discussed in~\cite[Section~VI.B)]{Park2018Payoff-Dynamic-}.  $\delta$-antipassivity was originally defined in~\cite{Fox2013Population-Game} as the ``antisymmetric" of $\delta$-passivity without the additional condition~\cite[(54a)]{Park2019From-population}, which is needed to ascertain  stability in~\cite{Park2019From-population,Park2018Payoff-Dynamic-}.

\begin{remark} {\bf (When (\ref{eq:DeltaAntidissipat}) implies $\delta$-antipassivity)} 
Suppose $\Pi$ is chosen as in Remark~\ref{rem:deltapassive-vs-dissipative}, where the $n \times n$ block-partitions of $\Pi$ are  $\Pi_{11}=0$, $\Pi_{22}=0$ and $\Pi_{12}=\Pi_{12}=\frac{1}{2}I$. For this choice of $\Pi$, if~(\ref{eq:DeltaAntidissipat}) holds for valid $Q$ and $\varsigma$ then, by choosing $\zeta=\dot{x}$, we conclude that the PDM is $\delta$-antipassive according to~\cite[Definition~12]{Park2019From-population}. In particular, (\ref{PDMdissip}) implies~\cite[(54b)]{Park2019From-population}, while (\ref{vanish1}) and (\ref{ssIO}) imply~\cite[(54a)]{Park2019From-population}. Note that the opposite may not hold unless one can show the existence of an appropriate $\varsigma$ satisfying~(\ref{PDMdissip}) and~(\ref{vanish2}). 
\end{remark}

We proceed now with a proof of Theorem~\ref{PDMthm}.
\vspace{-.15in}
\begin{proof}
 The rest points are the solutions of the simultaneous equations $\nu(x,p)=0$, $f(q,x)=0$ and $p=h(q,x)$.  Since the latter two imply $p=F(x)$ by (\ref{ssIO}),  Nash stationarity property (\ref{NS}) of the EDM  ensures that the $x$ components of the rest points are Nash equilibria. To prove global asymptotic stability, we 
 use the Lyapunov function $$V(x,q)=S(x,h(q,x))+Q(q,x),$$
which is nonnegative definite and, from (\ref{vanish1}) and (\ref{informative}), vanishes on the set of rest points.
Note that
 \begin{eqnarray*}
 \frac{\partial V(x,q)}{\partial x} \!\!\! &=& \!\!\! \frac{\partial Q(q,x)}{\partial x}+ \left.\frac{\partial S(x,p)}{\partial x}\right|_{p=h(q,x)} \\
 &&+ \left.\frac{\partial S(x,p)}{\partial p}\right|_{p=h(q,x)} \frac{\partial h(q,x)}{\partial x} \\
  \frac{\partial V(x,q)}{\partial q} \!\!\! &=& \!\!\! \left.\frac{\partial S(x,p)}{\partial p}\right|_{p=h(q,x)} \frac{\partial h(q,x)}{\partial q}+ \frac{\partial Q(q,x)}{\partial q}.
 \end{eqnarray*}
 Then, we write
 \begin{eqnarray}\nonumber 
 && \!\!\!\!\!\!\!\!  \!\!\!\!\!\!\!\! \!\!\!\!\!\!\!\!\frac{\partial V(x,q)}{\partial x}\nu(x,h(q,x))+ \frac{\partial V(x,q)}{\partial q}f(q,x) \\
 &&\qquad \qquad \qquad \quad  =\Theta_1(x,q)+\Theta_2(q,x)
\label{Q1Q2}
 \end{eqnarray}
where
 \begin{eqnarray*}
&& \!\!\!\!\!\!\!\! \!\!\!\!  \Theta_1(q,x) := \frac{\partial Q(q,x)}{\partial q}f(q,x)+\frac{\partial Q(q,x)}{\partial x}\nu(x,h(q,x))\\
&& \!\!\!\!\!\!\!\! \!\!\!\! \Theta_2(q,x) := \left.\frac{\partial S(x,p)}{\partial x}\right|_{p=h(q,x)}\nu(x,h(q,x))\\
&& \qquad \quad +\left.\frac{\partial S(x,p)}{\partial p}\right|_{p=h(q,x)} \frac{\partial h(q,x)}{\partial x}\nu(x,h(q,x))\\
&& \qquad \quad + \left.\frac{\partial S(x,p)}{\partial p}\right|_{p=h(q,x)} \frac{\partial h(q,x)}{\partial q}f(q,x).
\end{eqnarray*}
Next we note from (\ref{PDMdissip}) and (\ref{dissipation}) that
 \begin{eqnarray*}
 \Theta_1(q,x) &\le& -\varsigma(q,x) -\begin{bmatrix}u\\ \zeta \end{bmatrix}^T \Pi \begin{bmatrix}u  \\ \zeta \end{bmatrix} 
 \\
\Theta_2(q,x) &\le& -\sigma(x,h(q,x))+\begin{bmatrix} u \\ \zeta \end{bmatrix}^T \Pi \begin{bmatrix} u \\ \zeta \end{bmatrix}
\end{eqnarray*}
where $\zeta:=\nu(x,h(q,x))$ and  $u:=\frac{\partial h(q,x)}{\partial x}\zeta
+\frac{\partial h(q,x)}{\partial q}f(q,x)$. Substituting in (\ref{Q1Q2}), we get 
 \begin{eqnarray} \nonumber 
 &&\!\!\!\!\!\!\!\! \!\!\!\!  \!\!\!\!  \frac{\partial V(x,q)}{\partial x}\nu(x,h(q,x))+ \frac{\partial V(x,q)}{\partial q}f(q,x) \\
 &&\qquad \qquad  \le-\varsigma(q,x) -\sigma(x,h(q,x)),
 \end{eqnarray}
where the right-hand side is negative semidefinite and, from (\ref{vanish2}) and (\ref{negdef}),
vanishes on the set of rest points. Thus, we conclude global asymptotic stability of this set.
\end{proof}

\vspace{-.1in}
\subsection{Example: Congestion Game with Mixed Autonomy and Smoothing Dynamics}
We proceed to analyze the following dynamical version of Example~\ref{mixed} in which the payoff responds to changes in $F\big(x(t)\big)$ according to a first-order system that, as argued in~\cite{Fox2013Population-Game} for a similar example, smooths short-term fluctuations and isolates longer term trends. The smoothing dynamics can account for, {\it e.g.}, the time lag with which the drivers receive and process congestion information.

\begin{definition} \label{def:dynamicmixed} {\bf (Congestion Game with Mixed Autonomy and Smoothing Dynamics) } Given the link delay functions $\Phi_\ell$, $\ell=1,\dots,L$, and the routing matrix $R$  in Example~\ref{mixed}, and a positive time constant $\tau$, the  PDM to be analyzed here is:

\begin{subequations}
\label{eq:dyn1mixeddyn}
\begin{eqnarray} \label{eq:dyn1mixeddyn-a}
\tau \dot{q}(t)&=& - q(t)+ \begin{bmatrix} \Phi_1(z_1(t)) \\ \vdots \\ \Phi_L(z_L(t)) \end{bmatrix}  \\
p(t)&=&-\begin{bmatrix}R\\ R\end{bmatrix}q(t),
\end{eqnarray}
\end{subequations} where $z$ is as defined in (\ref{zdef}), 
 $q(t)\in \mathbb{R}^L$, and $p(t)\in \mathbb{R}^{2N}$. Note that  (\ref{eq:dyn1mixeddyn}), with steady-state condition $\dot{q}(t)=0$, recovers  (\ref{payoff-vec}); that is, (\ref{ssIO}) holds.

\end{definition}

\begin{assumption} \label{assu:PDM} {\bf (Delay monotonicity)} Recall that the function $\Phi_\ell$ in Example~\ref{mixed} represents the delay on link $\ell$ as a function of  $z_\ell\ge 0$. We assume that this function is {\it strictly} increasing, continuously differentiable, and surjective with codomain $[\alpha_\ell,\infty)$. Since $\Phi_\ell(z_\ell)$ is in $[\alpha_\ell,\infty)$,
the set $[\alpha_1,\infty)\times \cdots \times [\alpha_L,\infty)$ is forward invariant for $q(t)$ in (\ref{eq:dyn1mixeddyn-a}). 
\end{assumption}

When $\Phi_\ell$ satisfies Assumption~\ref{assu:PDM}, we let $\phi_1,\cdots,\phi_L$ be functions such that
 $
 \phi'_\ell(z_\ell)=\Phi_\ell(z_\ell)
 $
and define
$$
\phi(z):=\phi_1(z_1)+\cdots+\phi_L(z_L),
$$
which is  strictly convex since $\Phi_\ell$, $\ell=1,\dots,L$, are strictly increasing functions.
It follows that
$
\nabla\phi(z)=\begin{bmatrix} \Phi_1(z_1) & \cdots & \Phi_L(z_L) \end{bmatrix}^T
$
and (\ref{eq:dyn1mixeddyn-a}) can be rewritten as
$$
\tau \dot{q}(t)= - q(t)+\nabla \phi(z(t))
$$
or, equivalently, as in (\ref{PDM1}) with
\begin{equation}\label{fqx}
f(q,x)=\left.\frac{1}{\tau}(-q+\nabla \phi(z))\right|_{z=[\mu R^T \ R^T ]x}.
\end{equation} 

\subsubsection{Storage function based on Legendre's transform}

To ascertain the stability of~(\ref{eq:meandynamic}) when $F$ is contractive and the revision protocol of the EDM is of the perturbed best response (PBR) type, \cite[Theorem~3.1]{Hofbauer2007Evolution-in-ga} constructed a Lyapunov function that makes use of Legendre's transform.  In~\cite[Section~IX.B]{Park2018Payoff-Dynamic-}, this idea was adapted to construct a storage function for a smoothing PDM based on a potential game $F$, subject to additional constraints on the image of $F$. While the Lyapunov function in~\cite[Theorem~3.1]{Hofbauer2007Evolution-in-ga} involved the Legendre transform of the so-called {\it deterministic perturbation}, the storage function in~\cite[Section~IX.B]{Park2018Payoff-Dynamic-} incorporates the Legendre transform of the potential of $F$. Subject to Assumption~\ref{assu:PDM} and motivated by the latter approach, we propose the following candidate storage function for the PDM~(\ref{eq:dyn1mixeddyn}):
\begin{equation}\label{Qdef}
Q(q,x):=\left.\frac{1}{\tau}(\phi(z)-q^Tz-\phi^*(q))\right|_{z=[\mu R^T \ R^T]x}
\end{equation} with $q \in [\alpha_1,\infty)\times \cdots \times [\alpha_L,\infty)$ and $x \in \mathbb{R}^{2N}_{\ge 0}$. Here, $\phi^*$ is the Legendre transform of $\phi$, defined as
\begin{equation}\label{legendre}
\phi^*(q) : = \min_{y \in \mathbb{R}_{\ge 0}^{L}}  \{\phi(y) - q^Ty\}.
\end{equation}
Note that $\phi^*(q)$ is well-defined for $q\in [\alpha_1,\infty)\times \cdots \times [\alpha_L,\infty)$, since the minimization in (\ref{legendre}) decomposes
into $$\min_{y_\ell\ge 0}\phi_\ell(y_\ell)-q_\ell y_\ell, \quad \ell=1,\dots,L,$$
which has  unique solution satisfying $\phi'_\ell(y_\ell)=\Phi_\ell(y_\ell)=q_\ell$ when $q_\ell\in [\alpha_\ell,\infty)$ by surjectivity of $\Phi_\ell$. 

\begin{remark} {\bf ($F$ is not a potential game)}  The approach in~\cite[Section~IX.B]{Park2018Payoff-Dynamic-} applies only to potential games~\cite{Monderer1996Potential-games,Sandholm2001Potential-games}. However,  $F$ in Example~\ref{mixed} is not a potential game, since its Jacobian is asymmetric. We circumvent this difficulty by working with the Legendre transform of $\phi$.
\end{remark}

The following proposition guarantees that this choice for $Q$ satisfies the conditions of Theorem~\ref{PDMthm} for $\Pi$ as chosen in Lemma~\ref{multi-passive}, and $W$ is as in (\ref{W}). In contrast to \cite[Proposition~9]{Park2018Payoff-Dynamic-}, no other requirements, other than Assumption~\ref{assu:PDM}, need to be imposed on the image of $F$. In addition, our proof for the proposition guarantees the existence of an associated $\zeta$
for which (\ref{PDMdissip}),(\ref{vanish1}),(\ref{vanish2}) hold.

\begin{proposition}\label{prop:dynamicmixed} 
If Assumption~\ref{assu:PDM} holds then $Q$, as defined in~(\ref{Qdef}) for the PDM~(\ref{eq:dyn1mixeddyn}), satisfies the hypotheses (\ref{PDMdissip}),(\ref{vanish1}),(\ref{vanish2})
of Theorem \ref{PDMthm} with
 \begin{equation}\label{Pi4ex}
 \Pi_{11}=\Pi_{22}=0 \quad \mbox{and} \quad \Pi_{12}=\Pi_{21}=\frac{1}{2}W,
 \end{equation}
 where $W$ is as in (\ref{W}).
\end{proposition}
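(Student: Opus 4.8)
The plan is to verify the three conditions (\ref{PDMdissip}), (\ref{vanish1}), (\ref{vanish2}) by direct computation, exploiting the Legendre-duality structure of $Q$ in (\ref{Qdef}). First I would record the two partial derivatives of $Q$. Writing $y^*(q)$ for the unique minimizer in (\ref{legendre}), which exists and is nonnegative for $q_\ell \in [\alpha_\ell,\infty)$ by the surjectivity in Assumption~\ref{assu:PDM}, its first-order condition reads $\nabla\phi(y^*(q))=q$, i.e.\ $y^*(q)=(\nabla\phi)^{-1}(q)$. The envelope theorem then yields $\nabla_q\phi^*(q)=-y^*(q)$, so that $\frac{\partial Q}{\partial q}=\frac{1}{\tau}\big(y^*(q)-z\big)^T$, while differentiating through $z=[\mu R^T\ R^T]x$ gives $\frac{\partial Q}{\partial x}=\frac{1}{\tau}\big(\nabla\phi(z)-q\big)^T[\mu R^T\ R^T]$.

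The key second step is an exact cancellation of the cross term. Since $h(q,x)=-\begin{bmatrix}R\\R\end{bmatrix}q$ does not depend on $x$, in (\ref{psi-def}) we have $\frac{\partial h}{\partial x}=0$ and $\frac{\partial h}{\partial q}=-\begin{bmatrix}R\\R\end{bmatrix}$, so the top block of $\psi$ is $-\begin{bmatrix}R\\R\end{bmatrix}f$ and the bottom block is $\zeta$. With $\Pi$ as in (\ref{Pi4ex}) and using the symmetry of $W$, a one-line computation gives $-\psi^T\Pi\psi=f^T\begin{bmatrix}R\\R\end{bmatrix}^TW\zeta$. Now invoking (\ref{fqx}), namely $f=\frac{1}{\tau}(\nabla\phi(z)-q)$, together with the identity $[\mu R^T\ R^T]=\begin{bmatrix}R\\R\end{bmatrix}^TW$ from (\ref{zdef}) and (\ref{W}), I would rewrite $\frac{\partial Q}{\partial x}=f^T\begin{bmatrix}R\\R\end{bmatrix}^TW$, whence $\frac{\partial Q}{\partial x}\zeta$ equals $-\psi^T\Pi\psi$ exactly and cancels it in (\ref{PDMdissip}). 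The dissipation inequality therefore collapses to the single requirement $\frac{\partial Q}{\partial q}f\le-\varsigma$.

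The third step defines $\varsigma$ and reads off the vanishing conditions from strict convexity. Setting $\varsigma:=-\frac{\partial Q}{\partial q}f$ and substituting $q=\nabla\phi(y^*(q))$ yields $\varsigma=\frac{1}{\tau^2}\big(z-y^*(q)\big)^T\big(\nabla\phi(z)-\nabla\phi(y^*(q))\big)$. Because each $\phi_\ell$ is strictly convex (its derivative $\Phi_\ell$ is strictly increasing by Assumption~\ref{assu:PDM}), $\nabla\phi$ is strictly monotone, so $\varsigma\ge0$ with equality iff $z=y^*(q)$; by injectivity of $\nabla\phi$ this is equivalent to $\nabla\phi(z)=q$, i.e.\ to $f=0$, which establishes (\ref{vanish2}) and shows (\ref{PDMdissip}) holds (in fact with equality). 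For (\ref{vanish1}), substituting $\phi^*(q)=\phi(y^*(q))-q^Ty^*(q)$ into (\ref{Qdef}) recasts $Q$ as the (scaled) Bregman divergence $Q=\frac{1}{\tau}\big(\phi(z)-\phi(y^*(q))-\nabla\phi(y^*(q))^T(z-y^*(q))\big)$, which by strict convexity is nonnegative and vanishes precisely when $z=y^*(q)$, i.e.\ when $f=0$.

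I expect the main obstacle to be the careful handling of the Legendre transform rather than any of the algebra: one must justify the envelope-theorem formula $\nabla_q\phi^*=-y^*$ and confirm that the constrained minimizer in (\ref{legendre}) coincides with the interior stationary point $\Phi_\ell^{-1}(q_\ell)\ge0$, both of which are secured by the strict monotonicity and surjectivity in Assumption~\ref{assu:PDM}. Once the two derivatives of $Q$ are in hand, the remainder—the cancellation of the cross term against $-\psi^T\Pi\psi$ and the identification of $Q$ and $\varsigma$ with a Bregman divergence and a monotonicity gap—is essentially bookkeeping and delivers all three conditions of Theorem~\ref{PDMthm}.
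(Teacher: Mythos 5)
Your proposal is correct and follows essentially the same route as the paper's proof: the exact cancellation $\frac{\partial Q}{\partial x}\zeta = -\psi^T\Pi\psi$, the choice $\varsigma := -\frac{\partial Q}{\partial q}f$, and the rewriting of $\varsigma$ as a monotonicity gap of $\nabla\phi$ via the inverse relation $\nabla\phi^*(q)=-y^*(q)$ (which the paper derives in a footnote rather than by citing the envelope theorem). Your Bregman-divergence reformulation of $Q$ for condition (\ref{vanish1}) is a cosmetic rephrasing of the paper's direct argument that $Q\ge 0$ vanishes exactly at the unique minimizer of the Legendre problem.
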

 
 \begin{proof} It follows from (\ref{Qdef})-(\ref{legendre}) that $Q(q,x)\ge 0$, and it vanishes when $z=[\mu R^T \ R^T]x$ is the  minimizer in (\ref{legendre}), that is when
$
\nabla\phi(z)=q.
$
Since this is the same condition for $f(q,x)$ in (\ref{fqx}) to vanish, we conclude that (\ref{vanish1}) holds. 
Next note that
\begin{subequations}\label{Qx}
\begin{eqnarray}
\frac{\partial Q(q,x)}{\partial x}\zeta \!\!&=&\!\!\frac{1}{\tau}(-q+\nabla \phi(z))^T
\begin{bmatrix} \mu R^T & R^T \end{bmatrix}\zeta\\
\!\!&=&\!\!f(q,x)^T \label{Qxpartb}
\begin{bmatrix} \mu R^T & R^T \end{bmatrix}\zeta \\
\!\!&=&\!\!-\psi^T \Pi \psi. \label{Qxpartc}
\end{eqnarray}
\end{subequations}
Here (\ref{Qxpartc}) follows because, from (\ref{psi-def}) and $h(q,x)=-\begin{bmatrix}R \\ R \end{bmatrix}q$
we have
$$
\psi^T=\begin{bmatrix} -f(q,x)^TR^T & -f(q,x)^TR^T & \zeta^T \end{bmatrix}
$$
and, with $\Pi$ as defined in (\ref{Pi4ex}) and (\ref{W}), we get
$$
-\psi^T\Pi\psi=\begin{bmatrix} f(q,x)^TR^T & f(q,x)^TR^T  \end{bmatrix}  \begin{bmatrix} \mu I & 0 \\ 0 & I \end{bmatrix}\zeta,
$$
which is equal to (\ref{Qxpartb}). Thus, (\ref{PDMdissip})
 follows from (\ref{Qx})  with
\begin{equation}\label{varsigdef}
\varsigma(q,x):=-\frac{\partial Q(q,x)}{\partial q}f(q,x).
\end{equation}
To show  (\ref{vanish2}) we note that
\begin{equation}
\frac{\partial Q(q,x)}{\partial q}=-\frac{1}{\tau}\left(\nabla \phi^*(q)+z\right)^T
\end{equation}
and, from (\ref{fqx}) and (\ref{varsigdef}),
\begin{equation}\label{varsigma1}
\varsigma(q,x)=\frac{1}{\tau^2}\left(\nabla \phi^*(q)+z\right)^T(-q+\nabla \phi(z)).
\end{equation}
We define the new variable $\bar{z}=-\nabla\phi^*(q)$, which satisfies the inverse relation\footnote{To see this, let $\bar{z}$ be the minimizer in (\ref{legendre}), that is $\nabla\phi(\bar{z})=q$ and $\phi^*(q)=\phi(\bar{z})-q^T\bar{z}$. Then $\phi(\bar{z})-\hat{q}^T\bar{z}\ge \phi^*(\hat{q})$, with equality when $\hat{q}=q$. Thus, $\phi(\bar{z})=\max_{\hat{q}}\phi^*(\hat{q})+\hat{q}^T\bar{z}$ and $\hat{q}=q$ is the maximizer: $\nabla\phi^*({q})=-\bar{z}$.} $q=\nabla\phi(\bar{z})$ and  rewrite (\ref{varsigma1}) as
\begin{equation}\label{varsigma2}
\varsigma(q,x)=\frac{1}{\tau^2}\left(z-\bar{z}\right)^T(\nabla \phi(z)-\nabla\phi(\bar{z})).
\end{equation}
Since $\phi$ is strictly convex, the expression in (\ref{varsigma2}) is nonnegative and vanishes only when $z=\bar{z}$, that is only when
$
\nabla \phi(z)=\nabla\phi(\bar{z})=q.
$
This is the same condition for $f(q,x)$ in (\ref{fqx}) to vanish, thus (\ref{vanish2}) follows. 
\end{proof}

This proposition allows us to use Theorem~\ref{PDMthm} to conclude global asymptotic stability  for  the set of rest points of (\ref{eq:MeanClosedLoopModel}) when the EDM satisfies the conditions of Lemma~\ref{multi-passive} and the PDM is specified by~(\ref{eq:dyn1mixeddyn}), subject to Assumption~\ref{assu:PDM}. In particular, we can use Remark~\ref{rem:IPCIsEachPopDiss} to state without proof the following counterpart of Corollary~\ref{cor:stabilityExample1}:

\begin{corollary}  If the PDM~(\ref{eq:dyn1mixeddyn}) satisfies Assumption~\ref{assu:PDM}, and the EDM is of the IPC class, such as the Smith EDM, then the set of rest points of (\ref{eq:MeanClosedLoopModel}) (whose $x$ components constitute $NE(F)$) is globally asymptotically stable.
\end{corollary}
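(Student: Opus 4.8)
The plan is to obtain global asymptotic stability as a direct instance of Theorem~\ref{PDMthm}, verifying its three hypotheses for the mixed-autonomy closed loop and, crucially, arranging that the EDM and the PDM are certified with one and the same multiplier $\Pi$. First I would settle the EDM side. An IPC EDM is Nash stationary by Remark~\ref{rem:NSEXamples}, and by Remark~\ref{rem:IPCIsEachPopDiss} each population $r$ is $\delta$-dissipative with $\Pi_{11}^r=\Pi_{22}^r=0$, $\Pi_{12}^r=\Pi_{21}^r=\tfrac{1}{2}I_{n^r}$. Lemma~\ref{multi-passive} then aggregates these per-population certificates into a single multiplier for the multi-population EDM (\ref{EDM}), with $\Pi_{11}=\Pi_{22}=0$ and $\Pi_{12}=\Pi_{21}=\tfrac12 W$, where $W$ is the block-diagonal matrix (\ref{multiplier}) built from \emph{arbitrary} positive weights $w^1,\dots,w^{2\gamma}$. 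Since the composite storage function $S=\sum_r w^r S^r$ and the associated $\sigma=\sum_r w^r\sigma^r$ inherit informativeness from the per-population conditions (\ref{informative-sub}) — each $S^r,\sigma^r$ vanishing exactly when $\nu^r=0$ and $w^r>0$ — the equivalences (\ref{negdef})--(\ref{informative}) hold for the aggregate, so the EDM is genuinely $\delta$-dissipative w.s.r. this $\Pi$.

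The key step, and the one deserving care, is to select the weights so that the resulting $W$ coincides with the specific $W=\mathrm{diag}(\mu I,I)$ of (\ref{W}) that governs the PDM storage function. Because the social state is ordered as $x=[\mathbf{x}^{\rm aut};\mathbf{x}^{\rm reg}]$, I would set $w^r=\mu$ for the autonomous populations $r=1,\dots,\gamma$ and $w^{\gamma+r}=1$ for the regular populations $r=1,\dots,\gamma$. With this choice the block-diagonal $W$ of (\ref{multiplier}) equals exactly the $\mathrm{diag}(\mu I_N,I_N)$ of (\ref{W}), and the EDM is $\delta$-dissipative with precisely the multiplier (\ref{Pi4ex}).

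Next I would dispatch the PDM side. The steady-state identity (\ref{ssIO}) for (\ref{eq:dyn1mixeddyn}) was already noted in Definition~\ref{def:dynamicmixed}, and Proposition~\ref{prop:dynamicmixed} supplies, under Assumption~\ref{assu:PDM}, a storage function $Q$ and a nonnegative $\varsigma$ satisfying (\ref{PDMdissip})--(\ref{vanish2}) with the very same $\Pi$ in (\ref{Pi4ex}). All three hypotheses of Theorem~\ref{PDMthm} are thus met for one common $\Pi$: Nash stationarity, $\delta$-dissipativity of the EDM, and the PDM inequalities. Invoking Theorem~\ref{PDMthm} yields that the $x$-components of the rest points of (\ref{eq:MeanClosedLoopModel}) form $NE(F)$ and that the rest-point set is globally asymptotically stable, which is the claim.

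The only genuine obstacle is the compatibility of the multipliers just described: Theorem~\ref{PDMthm} demands a single $\Pi$ for both subsystems, whereas Proposition~\ref{prop:dynamicmixed} hands us a $W$ dictated by the link-capacity factor $\mu$, while Lemma~\ref{multi-passive} leaves $W$ free. Matching them amounts to reading off the per-population weights from the autonomous/regular block structure; once this bookkeeping is done the conclusion is immediate, which is why the corollary can be stated without proof.
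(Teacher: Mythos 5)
Your proposal is correct and follows exactly the route the paper intends: Remark~\ref{rem:NSEXamples} and Remark~\ref{rem:IPCIsEachPopDiss} for the EDM side, Lemma~\ref{multi-passive} with the weights $w^r=\mu$ (autonomous) and $w^r=1$ (regular) so that (\ref{multiplier}) reproduces the $W$ of (\ref{W}), Proposition~\ref{prop:dynamicmixed} for the PDM side, and then Theorem~\ref{PDMthm} with the common multiplier (\ref{Pi4ex}). The weight-matching ``bookkeeping'' you single out is precisely the step the paper leaves implicit when it states the corollary without proof.
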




\section{Conclusion}
We presented dissipativity tools to establish global asymptotic stability of the set of Nash equilibria in a deterministic model of population games. This model allows for a dynamic payoff mechanism as well as broad classes of protocols by which the agents revise their strategies. Our results generalized those in \cite{Hofbauer2009Stable-games-an} that use contraction properties of the payoff, and those in \cite{Fox2013Population-Game,Park2018Passivity-and-e,Park2019From-population,Park2018Payoff-Dynamic-} that relate contraction to passivity properties and account for dynamical payoff mechanisms. We defined the notion of $\delta$-dissipativity for the evolutionary dynamics model and showed in Theorems \ref{main} and \ref{PDMthm} that a complementary property of the payoff dynamics model guarantees stability for the set of Nash equilibria. This complementary property  relaxed the contraction property of the payoff used in earlier results and allowed us to define a class of ‘weighted contraction’ games, which encompasses a routing game example with mixed autonomy.  We also presented a numerical method that uses convex optimization to check the aforementioned relaxed contraction properties. We hope that the results of this paper will enable researchers to study broader classes of payoff dynamics arising in applications.

\vspace{-.1in}
\begin{appendices}
\section{Proof of Proposition \ref{prop2}}
Since  $J(x)\in {\rm box}\{ G_0, C_1D_1^T,\dots, C_dD_d^T \}$  and $\Pi_{11}=0$ we will prove (\ref{incremental}) by showing that
\begin{eqnarray} \label{qua}
&&\!\!\!\!\! \!\!\!\!\! \!\!\!  P(\Pi_{12}^T(G_0+ \gamma_1C_1D_1^T+\cdots +\gamma_dC_dD_d^T)+\\
&&\!\!\!\!\! (G_0^T+ \gamma_1D_1C_1^T+\cdots +\gamma_dD_dC_d^T)\Pi_{12}+\Pi_{22})P\le 0 \nonumber 
\end{eqnarray}
for all $\gamma_i \in [0,1]$. 
Inequality (\ref{qua}) means that, for all $x\in \mathbb{R}^n$,
\begin{eqnarray}\label{quad}
&&\!\!\!\!\! \!\!\!\!\! \!\!\!  \!\!\!\!\! \!\!\!\!\! \!\!\! x^TP(\Pi_{12}^TG_0+G_0^T\Pi_{12}+\Pi_{22})Px \\ \nonumber
&& \!\!\!\!\! \!\!\!\!\! \!\!\!  +x^TP\Pi_{12}^T(C_1y_1+\cdots +C_dy_d) \\  \nonumber
&& +(C_1y_1+\cdots +C_dy_d)^T\Pi_{12}Px\le 0,
\end{eqnarray}
where we have substituted $y_i:=\gamma_iD_i^TPx$. Since $\gamma_i \in [0,1]$, we have 
\begin{equation}\label{xycon}
y_i^TD_i^TPx \ge y_i^Ty_i \quad i=1,\dots,d.
\end{equation}
Thus, we wish to show that (\ref{quad}), rewritten here as
\begin{equation}\label{quad1}
\begin{bmatrix} x \\ y_1 \\ \vdots \\ y_d \end{bmatrix}^T
\begin{bmatrix} P(\Pi_{12}^TG_0+G_0^T\Pi_{12}+\Pi_{22})P & P\Pi_{12}^TC  \\ C^T\Pi_{12}P & 0\end{bmatrix} 
\begin{bmatrix} x \\ y_1 \\ \vdots \\ y_d \end{bmatrix} \le 0,
\end{equation}
holds when the variables $x,y_1,\cdots,y_d$ are constrained by (\ref{xycon}) or, equivalently,
\begin{equation}\label{quad2}
\begin{bmatrix} x \\ y_1 \\ \vdots \\ y_d \end{bmatrix}^T
\begin{bmatrix}0 & PD_iE_i  \\ E_i^TD_i^TP & -2E_i^TE_i\end{bmatrix} 
\begin{bmatrix} x \\ y_1 \\ \vdots \\ y_d \end{bmatrix} \ge 0 \quad i=1,\dots,d,
\end{equation}
where $E_i$ is the $\varrho_i \times (\varrho_1+\cdots +\varrho_d)$ matrix such that 
$$
\begin{bmatrix}  E_1 \\ \vdots \\ E_d \end{bmatrix}= I_{\varrho_1+\cdots +\varrho_d}, \quad \mbox{therefore} \quad E_i\begin{bmatrix}  y_1 \\ \vdots \\ y_d \end{bmatrix}=y_i.
$$
It follows from the S-procedure \cite{LMI} that if there exist nonnegative constants $\omega_1,\cdots, \omega_d$ such that
\begin{eqnarray}\nonumber
&& \!\!\!\!\! \!\!\!\!\! \!\!\!  \!\!\!\!\! \!\!\!\!\! \!\!\!  \begin{bmatrix} P(\Pi_{12}^TG_0+G_0^T\Pi_{12}+\Pi_{22})P & P\Pi_{12}^TC  \\ C^T\Pi_{12}P & 0\end{bmatrix}  \\
&& +\sum_{i=1}^d \omega_i \begin{bmatrix}0 & PD_iE_i  \\ E_i^TD_i^TP & -2E_i^TE_i\end{bmatrix} \le 0 \label{Sproc}
\end{eqnarray}
then  (\ref{quad1}) holds whenever (\ref{quad2}) does. Indeed  (\ref{Sproc}) can be rewritten as (\ref{Scheck}), and the existence of nonnegative constants $\omega_1,\cdots, \omega_d$ is the hypothesis of the Proposition. Thus, (\ref{quad1}) holds whenever (\ref{quad2}) does, we conclude (\ref{qua}) and, hence, (\ref{incremental}).

\end{appendices}

\end{document}